\documentclass[12pt]{article}
\usepackage{geometry}

\usepackage{soul}
\usepackage{titlesec}
\usepackage[T1]{fontenc}
\usepackage[utf8]{inputenc}
\usepackage{authblk}
\usepackage{graphics}
\geometry{
	a4paper,
	total={170mm,257mm},
	left=20mm,
	top=20mm,
	right=20mm,
	bottom=30mm,
}
\providecommand{\keywords}[1]{\textbf{Keywords: } #1}
\titleformat{\section}{\centering\large\scshape}{\thesection}{1em}{}
\titleformat{\subsection}{\centering\normalsize\scshape}{\thesubsection}{1em}{}
\titleformat{\subsubsection}{\normalsize\scshape}{\thesubsubsection}{1em}{}

\usepackage{hyperref}
\hypersetup{
	colorlinks=true,
	linkcolor=blue,
	filecolor=blue,      
	urlcolor=blue,
	citecolor=blue
}

\usepackage{natbib}

\usepackage{graphicx} 
\graphicspath{{art/}}

\usepackage{setspace}
\setstretch{1.2}

\usepackage{xr}
\usepackage{color}
\usepackage{amsmath, amssymb}
\usepackage{amsthm}
\usepackage{amsfonts,dsfont,mathrsfs}
\usepackage{booktabs,multirow,array}
\usepackage{bm}

\usepackage{algorithm}
\usepackage[noend]{algorithmic}

\makeatletter
\newenvironment{breakablealgorithm}
{
	\begin{center}
		\refstepcounter{algorithm}
		\hrule height.8pt depth0pt \kern2pt
		\renewcommand{\caption}[2][\relax]{
			{\raggedright\textbf{\ALG@name~\thealgorithm} ##2\par}%
			\ifx\relax##1\relax 
			\addcontentsline{loa}{algorithm}{\protect\numberline{\thealgorithm}##2}%
			\else 
			\addcontentsline{loa}{algorithm}{\protect\numberline{\thealgorithm}##1}%
			\fi
			\kern2pt\hrule\kern2pt
		}
	}{
		\kern2pt\hrule\relax
	\end{center}
}
\makeatother

\newcommand{\R}{\mathds{R}}
\newcommand{\E}{\mathds{E}}

\newcommand{\N}{\mathds{N}}
\newcommand{\X}{\mathds{X}}
\newcommand{\Y}{\mathds{Y}}

\renewcommand{\P}{\mathds{P}}

\newcommand{\PM}{{\rm PM}}


\newcommand{\Dcr}{\mathscr{D}}

\newcommand{\Qcr}{\mathscr{Q}}

\newcommand{\Xcr}{\mathscr{X}}

\newcommand{\Gcr}{\mathscr{G}}

\newcommand{\D}{{\rm d}}
\newcommand{\kernel}{\mathcal K}
\newcommand{\Law}{\mathcal{L}}
\newcommand{\e}{\mathrm{e}}
\newcommand{\Lap}{\mathrm{L}}

\def\simind{\stackrel{\mbox{\scriptsize{ind}}}{\sim}}
\def\simiid{\stackrel{\mbox{\scriptsize{iid}}}{\sim}}
\newcommand{\indic}{\mathds{1}}

\usepackage[normalem]{ulem}

\allowdisplaybreaks[4]

\newtheorem{theorem}{Theorem}

\newtheorem{definition}{Definition}
\newtheorem{proposition}{Proposition}
\newtheorem{corollary}{Corollary}
\newtheorem{lemma}{Lemma}

\begin{document}

\title{\scshape\LARGE{Nested Compound Random Measures}}

\author[1]{Federico Camerlenghi}
\author[1]{Riccardo Corradin}
\author[1]{Andrea Ongaro}
\affil[1]{\normalsize{Department of Economics, Management, and Statistics, University of Milano--Bicocca, 20126 Milano, Italy}}

\date{}

\maketitle 

\begin{abstract}

\noindent

Nested nonparametric processes are vectors of random probability measures widely used in the Bayesian literature to model the dependence across distinct, though related, groups of observations. These processes allow a two-level clustering, both at the observational and group levels.
Several alternatives have been proposed starting from the nested Dirichlet process by \cite{Rod08}. However, most of the available models are neither computationally efficient or mathematically tractable. In the present paper, we aim to introduce a range of nested processes that are mathematically tractable, flexible, and computationally efficient. Our proposal builds upon Compound Random Measures, which are vectors of dependent random measures early introduced by \cite{Gri17}. We provide a complete investigation of theoretical properties of our model.  In particular,  we prove a general posterior characterization for vectors of Compound Random Measures, which is interesting \textit{per se} and still not available in the current literature. Based on our theoretical results and the available posterior representation, we develop the first Ferguson \& Klass algorithm for nested nonparametric processes. We specialize our general theorems and algorithms in noteworthy examples.
We finally test the model's performance on different simulated scenarios, and we exploit the construction to study air pollution in different provinces of an Italian region (Lombardy). We empirically show how nested processes based on Compound Random Measures outperform other Bayesian competitors.

 	\vspace{12pt}
	\noindent\keywords{Bayesian nonparametrics, partial exchangeability, completely random measures, Poisson processes, nested processes}
\end{abstract}

\section{Introduction}

Exchangeability is a common assumption in the Bayesian nonparametric literature, which entails homogeneity of the observed data. However, in a large variety of applied problems data come from different, though related, studies, thus one requires more complex dependence structures. The partial exchangeability assumption \citep{Def38} is a possible solution: data are supposed to be homogeneous within the same group and conditional independent across the diverse groups. 
In this context, typical Bayesian proposals try to balance two extreme situations: the full exchangeable case and the case of unconditional independence across samples. A pioneer contribution in this direction is due to \citet{Cif78}; however, the investigation of dependent nonparametric priors has been spurred many years later by \citet{Mac99,Mac00},  with the introduction of the dependent Dirichlet process. 
Indeed, several statistical models have been proposed to induce dependence across random probability measures in the presence of multiple-sample data. Remarkable examples explore the stick-breaking construction \citep{Dun08}, the superposition of random measures \citep{Lij14,Gri13,Lij14b,Lij14c}, hierarchical structures \citep{Teh06,Cam_AOS19}, nested nonparametric processes \citep{Rod08,Cam19,Ber21,Lij23a}, thinned random measures \citep{Lau22}. 
The dependence structure induced by many of these statistical models may also be quantified in terms of Wasserstien indexes of dependence \citep{Cat21,Cat_arxiv}. 
We refer to  \citet{Qui22} for an updated review of dependent structures in Bayesian nonparametrics. 


The present paper deals with  special classes of dependent nonparametric priors, namely \textit{nested nonparametric processes}. They are extremely useful to cluster simultaneously observations  and distributions in a partially exchangeable setting. 
The first contribution for this type of models is due to \cite{Rod08}, who defined the nested Dirichlet process to cluster distributions within a Bayesian nonparametric setting. However, \cite{Cam19} have shown that the nested Dirichlet process exhibits  a degeneracy issue. More precisely, if two samples from different groups share the same  distinct value, then  the nested Dirichlet process model collapses to a situation of full exchangeability. This behavior is undesirable since the heterogeneity across samples is destroyed.  
The same degeneracy property also holds for mixture models at the level of the latent parameters. \cite{Cam19} proposed latent nested nonparametric priors to overcome this problem. Although this proposal solves the degeneracy issue of the nested Dirichlet process, it becomes hugely demanding from a computational standpoint, especially when the number of groups grows. 
Other strategies recently appeared in the Bayesian nonparametric literature to face the computational difficulties and combinatorial hurdles of latent nested models. These include a semi-hierarchical model proposed by \cite{Ber21}, where a convex linear combination of a Dirichlet process and a diffuse measure is embedded in an hierarchical structure, which allows for shared and component-specific atoms. \cite{Den21} have developed the Common Atoms Model (CAM), which are dependent Dirichlet processes whose atoms are shared across the different groups. CAM is much more computationally efficient with respect to the previous proposals, thanks to a nested slice sampler strategy. See also \citet{Ang23} and \citet{Ang24} for related contributions. Similarly, \cite{Lij23a} proposed a blending of the nested and hierarchical models. The authors focus  on the Dirichlet process case, and  overcome the degeneracy issue of the nested Dirichlet process by embedding a hierarchical structure within a nested construction. See also \cite{Wu23} for an interesting application of this model to single-cell data.


Nowadays nested processes are widely used for statistical analysis, but many nested constructions rely on the Dirichlet process \citep{Fer73}. While the use of the Dirichlet process simplifies computational procedures, it leads to a lack of flexibility compared to the range of possibilities available in Bayesian nonparametrics \citep[see, e.g.,][]{Lij07}. The present paper aims to introduce a general class of flexible nested random probability measures, whose theoretical investigation is possible as well. In addition, the theoretical analysis of this new family of priors will allow  to devise efficient computational procedures, competitive with respect to current proposals in the literature.
More precisely, we propose a class of nested priors building upon Compound Random Measures (CoRMs) by  \cite{Gri17}. CoRMs offer a very elegant solution to borrow information across different groups of observations, and they have been proved to be useful in a large number of settings.
Indeed, after the preliminary studies of \cite{Gri17}, CoRMs have been extended in several directions.  \citet{Gri18} used normalized CoRMs vectors for nonparametric regression, and they proposed a pseudo-marginal algorithm to estimate these models. \citet{Ber22} considered vectors of completely random measures where each random measure is represented through a set of latent measures, with the purpose of identifying common traits shared by sub-populations.  This set of latent measures builds upon a vector of CoRMs. \citet{Riv22} exploited CoRMs to define a novel survival regression model based on dependent random measures. 

In the present paper, we use CoRMs to define novel nested nonparametric processes. With respect to one of the most competitive alternative, i.e., the CAM of \cite{Den21}, our prior allows both for shared atoms and dependent weights across the diverse groups of data. On the contrary, in the CAM, the weights of the random probability measures are independent across groups, while the atoms are still shared. Thus, our proposal can better calibrate the desired level of interaction between the different groups by suitable specifications of the parameters. In addition, while few analytical results are available for CAM,  we provide  a full  theoretical analysis of our model. Among our results, we show a remarkable posterior representation of vectors of  CoRMs, which  is  still not available in the Bayesian literature. Such a representation is fundamental to deriving sampling schemes tailored to perform conditional posterior inference. Indeed, building upon the posterior representation, we can develop a Ferguson \& Klass algorithm \citep{Fer72} to carry out posterior inference for  mixtures of nested CoRMs. This will be the first conditional algorithm based on the Ferguson \& Klass representation for nested structures. Furthermore, from a computational perspective, the proposed Ferguson \& Klass algorithm turns out to be numerically stable and shows good performances, even when the number of groups $d$ increases (see Section~\ref{sec:nCoRMs_mixture}). 
A vector of CoRMs has a notable advantage over most competing models: it allows for the simultaneous estimation of the distribution of the entire population, which we refer to as the baseline distribution, along with the group-specific distributions. The baseline distribution can be used as a benchmark to compare group-specific profiles and better understand their unique characteristics.
We showcase the advantage of our proposal through an environmental application. Specifically, air pollution has grown largely in European countries in the last decades, and there is a large effort to act and reduce such pollution in urbanized areas. In particular, the Lombardy region, an industrialized county in the northern part of Italy, is known to be currently one of the areas in Europe with the largest amount of air pollution.  We apply the nested CoRMs model to cluster together provinces in the Lombardy area with similar pollution profiles. Such analysis helps identify clusters of homogeneous provinces that can potentially coordinate their actions to reduce the level of air pollution in the future. Additionally, it allows us to draw inferential conclusions on relevant environmental risk measures.

The paper is structured as follows. In Section~\ref{sec:main} we remind some basics on CoRMs. The posterior representation of CoRMs is the main contribution of Section~\ref{sec:post}, where we also analyze the noteworthy example of CoRMs with gamma-distributed scores and stable driven L\'evy intensity. Section~\ref{sec:nested} introduces nested CoRMs prior, along with a Bayesian analysis of the model. Computational aspects for mixture model extensions are discussed in Section~\ref{sec:comp}. The paper ends with an environmental application of the proposed model (Section~\ref{sec:pm10}). Proofs and additional illustrations are deferred to the appendix.

\section{Compound Random Measures}\label{sec:main}

We first introduce some notations. 
Let $(\Omega, \mathcal A, \P)$ be the underlying probability space, and $\X$ the Polish space where the observations take their values, supposed to be endowed with its Borel $\sigma$-algebra $\Xcr$. We also indicate by $\mathsf{P}_\X$  the space of all probability measures on $(\X, \Xcr)$, analogously  $\mathsf M_\X$ will stand  for the space of boundedly finite measures defined on $(\X, \Xcr)$, whereas $\mathcal P_\X$  and $\mathcal{M}_\X$ denote their Borel $\sigma$-algebras, respectively.\\
We assume to work in a multiple-sample framework, namely, we are provided with $d > 1$ samples coming from an array of $\X$-valued partially exchangeable observations 
$\{ X_{i,j}: \: i \geq 1, \,  j=1, \ldots , d\}$, in the spirit of \cite{Def38}. We denote by $\bm{X}_j := (X_{1,j}, \ldots , X_{n_j, j})$ the sample of size $n_j$ for the $j$th group of observations. According to the partial exchangeability assumption, there exists a vector of dependent random probability measures 
$(\tilde p_1, \ldots , \tilde p _d)$ such that 
\begin{equation} \label{eq:model}
\begin{split}
(X_{i_1, 1}, \ldots , X_{i_d, d}) \mid \tilde{p}_1, \ldots , \tilde{p}_d &\simind \tilde{p}_1 \otimes \cdots \otimes \tilde{p}_d \quad (i_1, \ldots ,i_d) \in \N^d\\
(\tilde{p}_1 , \ldots , \tilde{p}_d) & \sim \Qcr
\end{split}
\end{equation}
where $\Qcr$ is termed the de Finetti measure of the array of random variables on the space 
$\mathsf P_\X^d$, i.e., the $d$-fold product of $\mathsf P_\X$. 
The definition of the vector $(\tilde p_1, \ldots , \tilde p _d)$ is a crucial issue in the Bayesian framework. See \cite{Qui22} for a review. In this section we consider vectors of random probability measures obtained by normalizing CoRMs, according to \cite{Gri17}. 

\subsection{Definition of CoRMs}

We first recall that a  Completely Random Measure \citep[CRM,][]{Kin67} $\tilde \eta$ is a measurable map $\tilde \eta:(\Omega, \mathcal A)\to(\mathsf M_\X, \mathcal M_\X)$ which satisfies the following condition: for any disjoint sets $A_1, \dots, A_k \in \Xcr$, and for any $k \geq 1$, the random variables $\tilde \eta(A_1), \dots, \tilde \eta(A_k)$ are mutually independent. 
As proved by \cite{Kin67} \citep[see also][]{Dal08}, a CRM can be always represented as the sum of three components: (i) a part with random jumps at fixed locations; (ii) a deterministic drift; (iii) a part with random jumps and random locations. As the most of the current Bayesian nonparametric literature \citep{Lij10}, it is convenient to focus on CRMs of type (iii). In such a framework, $\tilde \eta$ equals $\sum_{i \geq 1}J_i \delta_{\tilde x_i}$, where $\{J_i\}_{i\geq 1}$ are non-negative random heights and $\{\tilde x_i\}_{i \geq 1}$ are $\X$-valued random locations. Moreover, $\tilde\eta $ can be represented as a functional of a marked Poisson point process. As a consequence, the following 
L\'evy-Khintchine representation holds true 
\begin{equation} \label{eq:LK}
\E\left[ \exp\left\{ -\int_\X  g(x) \tilde \eta(\D x)\right\}\right] = \exp\left\{ -\int_{\R_+}\int_{\X} \left(1 - \mathrm e^{-sg(x)}\right) \tilde \nu(\D s,\D x)  \right\}, 
\end{equation}
for any measurable function $g:\X\to\R$ such that $\int_\X \lvert g(x)\rvert \tilde\eta(\D x) < \infty$, almost surely. The measure $\tilde \nu$ in \eqref{eq:LK} is referred to as the  L\'evy intensity of $\tilde\eta $ and it satisfies the condition $$\int_{\R_+\times \X} \min\{s,1\}\tilde\nu(\D s, \D x) < \infty.$$ 
We restrict our attention to the case of homogeneous L\'evy intensities, i.e., $\tilde \nu$ can be factorized as follows: $\tilde \nu(\D s, \D x) = \nu(\D s) \alpha(\D x)$. The homogeneous specification is equivalent to assuming that the distribution of the jumps is independent of the distribution of the locations. We further suppose  that $\alpha$ is a finite and diffuse measure on $\X$. We finally remind that CRMs have been widely exploited in the Bayesian community to define nonparametric priors via normalization. See \citep{Reg03}, and \citep{Lij10} for a stimulating account.
CRMs are also the basic building blocks for the definition of CoRMs \citep{Gri17}.  
Vectors of CoRMs $(\tilde{\mu}_1, \ldots , \tilde{\mu}_d)$ are mainly characterized by two ingredients.
First, we need to specify a directing L\'evy measure $\nu^*$ on $\R_+$, which describes the information shared across different components of the vector $(\tilde \mu_1, \ldots, \tilde \mu_d)$ on a latent level. Secondly, we need to specify the score distribution $h$, which is a probability density function on $\R_+^d$. Such a distribution specializes the shared information into group-specific CRMs. Then, a vector of CoRMs $(\tilde{\mu}_1, \ldots , \tilde{\mu}_d)$ can be described as 
\begin{equation}\label{eq:mu_j_def}
\tilde{\mu}_j \mid \tilde{\eta}=  \sum_{i\geq 1} m_{j,i} J_i \delta_{\tilde x_i}, \quad j =1, \ldots , d,
\end{equation}
where $(m_{1,i}, \ldots , m_{d,i}) \simiid h$ are the score terms, for any $i \geq 1$, while $\tilde{\eta}= \sum_{i \geq 1} J_i \delta_{\tilde x_i}$ is a CRM on $(\X, \Xcr)$,
having L\'evy intensity $\nu^{*} (\D z ) \alpha (\D x)$, which is referred to as the driving L\'evy measure. As shown by \cite{Gri17}, a vector of CoRMs can be represented as a Completely Random Vector \citep[CRV, see, e.g.,][]{Kallenberg_2017} whose L\'evy intensity is of the form
\[
\rho_d (\D s_1, \ldots , \D s_d) = \int_{\R_+}  z^{-d}h (s_1/z, \ldots , s_d / z) \D s_1 \cdots \D s_d \nu^* (\D z).
\] 
Clearly, such a construction is pretty general, and specific choices of $\nu^*$ and $h$ lead to more tractable cases (see, e.g., Section~\ref{sec:example}). In the sequel we assume the validity of the following conditions. 
\begin{enumerate}
\item[(A1)] The directing L\'evy measure $\nu^*$ satisfies the following regularity condition
\begin{equation*}
\int_{\R_+^d} z^{-d}  \int_{\R_+} \min \{ 1, \| \bm{s}\| \} h (s_1/z, \ldots, s_d/ z) \D s_1 \cdots \D s_d \nu^* (\D z)  < \infty, 
\end{equation*}
having denoted by $\|\bm{s}\|$ the Euclidean norm of the vector $\bm{s}= (s_1, \ldots , s_d)^\intercal$.
\item[(A2)] We assume that the score distribution factorizes into independent components, with
\begin{equation*}\label{eq:prod_h}
h (s_1, \ldots , s_d) = \prod_{j=1}^d f(s_j),
\end{equation*}
where $f(\cdot)$ is a density function on $\R_+$. 
\end{enumerate}
Under the factorization of the score distribution described in \eqref{eq:prod_h}, the scores $m_{j,i}$s are independent and identically distributed (i.i.d.) according to density function $f$, for all $i \geq 1$ and for any group $j = 1, \dots, d$. A vector of dependent random measures defined as in \eqref{eq:mu_j_def} can be exploited as a building block for many statistical models, when we need to share information across different samples. 


\subsection{CoRMS for Bayesian nonparametric models}

A prior distribution $\Qcr$ in \eqref{eq:model} can be defined by normalizing a vector of  CoRMs $(\tilde\mu_1, \ldots , \tilde\mu_d)$, as suggested by \cite{Gri17}. More specifically, the vector of random probability measures $(\tilde p_1, \dots, \tilde p_d)$ in \eqref{eq:model} is obtained by divining each $\tilde\mu_j$ 
in \eqref{eq:mu_j_def} by its total mass, i.e.,  $\tilde p_j = \tilde \mu_j / \tilde \mu_j(\X)$. Such a normalization is possible if and only if $\P (0 < \tilde\mu_j (\X) < \infty) =1$, for any $j=1, \ldots , d$, whose validity is guaranteed if the following additional condition is enforced.
\begin{enumerate}
\item[(A3)] The resulting group-specific 
L\'evy measures $\nu_j$s satisfy
\begin{equation*}
\int_{\R_+}\nu_j(\D s) = \int_{\R_+}  \int_{\R_+} z^{-1}f( s/ z)\D s  \nu^* (\D z)  = \infty, 
\end{equation*}
where $f(\cdot)$ stands for the generic $j$th component of the score distribution. 
\end{enumerate}
The resulting vector of dependent random probability measures $(\tilde p_1, \ldots , \tilde p_d)$ allows to borrow information across the different groups of observations. This vector can be employed directly as a prior distribution, or convoluted with a kernel function to obtain dependent mixture models. Additionally, the normalized random measure
$\tilde p = \tilde \eta / \tilde \eta(\X)$ can be considered as a baseline distribution of the entire population. 

Scale changes of the distribution $f$ of the scores $m_{j,i}$ do not affect the joint distribution of the $\tilde p_j$s, because of normalization. Therefore, 
the mean value of the $m_{j,i}$s is immaterial and the impact of $f$ on the degree of dependence between the group-specific random probability measures lies instead in its relative variability. Specifically, the correlation between the $i$-th weights $m_{j,i} J_i$ and $m_{\ell,i} J_i$ of the generic $j$th and $\ell$th measures, with $j\neq \ell$, is equal to 
\begin{equation} \label{eq:corr}
\rho \left( m_{j,i} J_i, m_{\ell,i} J_i    \right) = \left(  1 + \frac{\sigma_m^2 }{\mu_m^2} \left( 1+ \frac{\mu_i^2}{\sigma_i^2}\right) \right)^{-1},
\end{equation}
where $\sigma_m^2$ is the variance of $m_{j,i}$, and $\mu_i$ and $\sigma_i^2$ are the mean and the variance, respectively, of $J_i$.
Therefore, for any $i$th weight and any non-degenerate distribution of $J_i$, this correlation is a decreasing function of the coefficient of variation ($\sigma_m / \mu_m$) of  $m_{j,i}$. Furthermore, by suitably specifying the coefficient of variation any degree of positive dependence can be reached.

A key object to understand the borrowing of information across samples is the  partially Exchangeable Partition Probability Function \citep[pEPPF, see e.g.][]{Cam19}, which is also a key quantity to carry out posterior inference in the dependent nonparametric framework. Consider the model \eqref{eq:model}, when  $(\tilde p_1 , \dots, \tilde p_d)$ is obtained by the normalization of a vector of CoRMs. Since the realizations of CoRMs are almost surely discrete, there can be ties within the same sample and across different samples $\bm{X}_j$, as $j=1,\ldots , d$. Thus, the $n=n_1+\cdots+n_d$ observations  may be partitioned into $K_n=k$ groups of distinct values, denoted here as $X_1^*, \ldots , X_k^*$.
Accordingly, $\bm{n}_j=(n_{1,j},\ldots,n_{k,j})$ denotes the vector of frequency counts and $n_{\ell,j}\geq 0$ is the number of elements of the $j$th sample that coincide with the $\ell$th distinct value, for any $j=1,\ldots,d$, $\ell=1,\ldots,k$, and $\sum_{j=1}^d n_{\ell,j}\ge 1$ for any $\ell=1,\ldots,k$. Note that $n_{\ell,j}=0$ means that the $\ell$th distinct value does not appear in the $j$th sample, but it appears in one of the other samples. Moreover, the $\ell$th distinct value is shared by any two samples $i$ and $j$ if and only if $n_{\ell,j}n_{\ell,i}\ge 1$. The pEPPF is then defined as
\begin{equation}
\label{eq:peppf}
\Pi_k^{(n)}(\bm{n}_1 , \cdots , \bm{n}_d) =\E\left[\int_{\X^k}\prod_{j=1}^d \prod_{\ell=1}^k \tilde{p}_j^{n_{\ell,j}} (\D x_\ell^*) \right]
\end{equation}
with the obvious constraint $\sum_{\ell=1}^k n_{\ell,j}=n_j$, for each $j=1,\ldots,d$. Equation~\eqref{eq:peppf} describes the probability distribution associated to a specific partition of the observations into distinct blocks with frequency counts $(\bm n_1, \dots, \bm n_d)$. Further, the expected value in \eqref{eq:peppf} is taken with respect to the random probability measures $(\tilde p_1, \dots, \tilde p_d)$, regardless the specific values of the atom associated to each block. Similarly to \citet{Jam09}, we can consider a set of suitable augmentation random variables $U_1, \dots, U_d$ which give us a tractable representation of the pEPPF. In particular, the generic $j$th element $U_j$ is a scale transformation of a gamma random variable $U_j = \Gamma_{n_j} / {\tilde\mu_j(\X)}$, with $\Gamma_{n_j} \sim \textsc{Gamma} (n_j, 1)$ and $\Gamma_{n_j}$ independent of $\tilde \mu_j(\X)$. Note that $U_j$ depends on the sample size $n_j$,  although this is not highlighted by our notation for the sake of simplicity. The expression of the pEPPF is provided by the following. 
\begin{proposition}\label{prop:eppf}
Suppose that  $\bm X_j:= (X_{1,j}, \ldots , X_{n_j,j})$, as $j=1, \ldots, d$, is a sample from the partially exchangeable model in \eqref{eq:model}, where $(\tilde{p}_1, \ldots ,\tilde{p}_d ) $ is a vector of normalized CoRMs. If $X_1^*, \dots, X_k^*$ are the distinct values out $(\bm X_1 ,\dots, \bm X_d)$ with frequencies $\bm n_j := (n_{1,j}, \dots, n_{k,j}), \; j = 1, \dots, d$,  then the pEPPF equals
\begin{multline}
\label{eq:EPPF}
\Pi_k^{(n)}(\bm{n}_1 , \cdots , \bm{n}_d) \\
= \int_{\R_+^d} \prod_{j=1}^d  \frac{u_j^{n_j-1}}{\Gamma (n_j)}  \exp
\left\{ -\alpha (\X)  \int_0^\infty  \left[ 1-\prod_{j=1}^d  \int_0^\infty  e^{-u_j m s} f( m) \D m  \right] \nu^*(\D s) \right\}\\
 \qquad \times \alpha^k (\X) \prod_{\ell=1}^k \int_0^\infty  \prod_{j=1}^d  \int_0^\infty e^{-u_j m s } (m s)^{n_{\ell, j}} f(m) \D  m\nu^* (\D s ) \D u_1 \cdots \D u_d .
\end{multline}
\end{proposition}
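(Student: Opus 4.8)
The plan is to follow the now-standard ``Gamma trick'' of \citet{Jam09}, adapted to the vector-valued (completely random vector) setting. The first step is to linearize the normalizing denominators. Writing $\tilde p_j = \tilde\mu_j / \tilde\mu_j(\X)$ and using $\sum_{\ell=1}^k n_{\ell,j} = n_j$, the integrand in \eqref{eq:peppf} becomes
\[
\prod_{j=1}^d \frac{1}{\tilde\mu_j(\X)^{n_j}} \int_{\X^k} \prod_{\ell=1}^k \prod_{j=1}^d \tilde\mu_j^{n_{\ell,j}}(\D x_\ell^*).
\]
I would then apply the elementary identity $\tilde\mu_j(\X)^{-n_j} = \Gamma(n_j)^{-1} \int_0^\infty u_j^{n_j-1} e^{-u_j \tilde\mu_j(\X)}\,\D u_j$, valid because (A3) guarantees $0 < \tilde\mu_j(\X) < \infty$ almost surely, and invoke Tonelli's theorem to bring the $u$-integrals outside the expectation. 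This reduces the problem to evaluating, for fixed $(u_1,\dots,u_d)$, the mixed moment
\[
\E\left[ e^{-\sum_{j=1}^d u_j \tilde\mu_j(\X)} \int_{\X^k} \prod_{\ell=1}^k \prod_{j=1}^d \tilde\mu_j^{n_{\ell,j}}(\D x_\ell^*) \right].
\]

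Next I would exploit the Poisson structure underlying the CoRM. The vector $(\tilde\mu_1,\dots,\tilde\mu_d)$ is driven by the marked Poisson process with points $(\tilde x_i, J_i, m_{1,i},\dots,m_{d,i})$ and intensity $\alpha(\D x)\,\nu^*(\D z)\,h(m_1,\dots,m_d)\,\D m_1\cdots\D m_d$. Because the realizations are purely atomic with atoms shared across components, the inner integral collapses to a sum over distinct index tuples,
\[
\int_{\X^k} \prod_{\ell=1}^k \prod_{j=1}^d \tilde\mu_j^{n_{\ell,j}}(\D x_\ell^*) = \sum_{i_1,\dots,i_k\ \mathrm{distinct}} \prod_{\ell=1}^k \prod_{j=1}^d (m_{j,i_\ell} J_{i_\ell})^{n_{\ell,j}},
\]
while the exponential factors as $e^{-\sum_j u_j \tilde\mu_j(\X)} = \prod_i e^{-\sum_j u_j m_{j,i} J_i}$. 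The heart of the argument is then the multivariate Mecke (extended Campbell) formula for Poisson processes: the expectation of a sum over distinct $k$-tuples weighted by a product over all atoms factorizes into $k$ single-point integrals times the Laplace functional of the process.

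Applying this, the Laplace functional reproduces the exponential term in \eqref{eq:EPPF}, while the $k$ moment integrals produce the product $\prod_{\ell=1}^k \int_0^\infty \prod_j \int_0^\infty e^{-u_j ms}(ms)^{n_{\ell,j}} f(m)\,\D m\,\nu^*(\D s)$. At this stage I would use assumption (A2), $h = \prod_j f$, together with the factorized intensity $\nu^*(\D z)\alpha(\D x)$, to split every $d$-dimensional score integral into a product of one-dimensional integrals against $f$ and to pull $\alpha(\X)$ out of each location integral, yielding the factor $\alpha(\X)$ in the exponent and $\alpha^k(\X)$ in front. Reassembling and recalling the $u_j$-integrals gives exactly \eqref{eq:EPPF}.

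I expect the main obstacle to be the rigorous justification of the Mecke-formula step: one must verify that the interchange of expectation with the (infinite) sum over distinct tuples and the product over all atoms is licit, and that all the arising integrals are finite. Here condition (A1) is the crucial ingredient, since it controls $\int \min\{1,\|\bm s\|\}\,\rho_d$ and hence the integrability of the relevant functionals, while the finiteness and diffuseness of $\alpha$ keeps the location integrals under control and ensures the $k$ selected atoms are almost surely distinct. Once integrability is secured, the remaining steps are routine applications of Tonelli's theorem together with the factorizations above.
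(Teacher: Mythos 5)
Your argument is correct, but it takes a genuinely different route from the paper's. The two proofs share the first step: the gamma identity $\tilde\mu_j(\X)^{-n_j}=\Gamma(n_j)^{-1}\int_0^\infty u_j^{n_j-1}e^{-u_j\tilde\mu_j(\X)}\,\D u_j$ followed by Tonelli, reducing everything to a mixed moment of the unnormalized vector weighted by $e^{-\sum_j u_j\tilde\mu_j(\X)}$. From there the paper does \emph{not} use Palm calculus: it localizes the $k$ distinct values on disjoint balls $A_\epsilon(x_\ell^*)$, invokes the independence of the completely random vector over disjoint sets to factor the expectation into a Laplace-functional term on the complement times $k$ local terms, writes each local term as a mixed partial derivative $(-1)^{n_{\ell,\bullet}}\,\partial^{n_{\ell,\bullet}}/\partial u_1^{n_{\ell,1}}\cdots\partial u_d^{n_{\ell,d}}$ of the local Laplace transform (computed via its Proposition~\ref{prop:prior_laplace}), extracts the leading order in $\alpha(A_\epsilon(x_\ell^*))$ using diffuseness of $\alpha$, and lets $\epsilon\to0$. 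You instead work directly on the marked Poisson process with intensity $\alpha(\D x)\,\nu^*(\D z)\,h(\bm m)\,\D\bm m$ and apply the multivariate Mecke formula to the sum over distinct $k$-tuples of atoms, obtaining the exponential factor and the $k$ single-point integrals in one stroke; assumption (A2) then splits the score integrals exactly as you describe, and the result matches \eqref{eq:EPPF}. Your route buys a cleaner derivation that dispenses with the $\epsilon$-ball bookkeeping and the control of the $o\left(\prod_\ell\alpha(A_\epsilon(x_\ell^*))\right)$ remainder, at the price of justifying the interchange of expectation with the infinite sum over distinct tuples (which you correctly flag, and which (A1) plus finiteness of $\alpha$ indeed delivers). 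The paper's route is more elementary --- only the joint Laplace functional and differentiation under the expectation --- and has the practical advantage that the same ball computation is reused verbatim for the numerator and denominator in the proof of the posterior characterization (Theorem~\ref{thm:posterior_CORMS}).
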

This result can also be found in Section 5 of \citep{Gri17}, where the authors describe sampling strategies for normalization of CoRMs, although this is not stated as a proposition. We provide a direct proof in Appendix~\ref{prof_prop_eppf}, because it is instrumental to derive the posterior characterization of CoRMs. We also remark that in the expression of the pEPPF one may get rid of the integrals over $u_1,, \ldots , u_d$ by disintegrating \eqref{eq:EPPF} and working conditionally on specific values of the aforementioned variables $U_1, \ldots , U_d$. 
This is convenient from a computational standpoint and leads us to deal with an augmented expression of the pEPPF.
\section{Posterior representation}\label{sec:post}
In this section we present a central result, which is still unavailable in the Bayesian literature: the posterior representation of a vector of CoRMs, conditionally on a sample from a partially exchangeable sequence as in \eqref{eq:model}. 
The posterior characterization of CoRMs is crucial to 
develop suitable conditional sampling strategies.  All the details are reported in the appendix (see Proposition~\ref{lem:discrete_comp} and Lemma~\ref{prop:prior_laplace}).

\begin{theorem} \label{thm:posterior_CORMS}
Consider the model \eqref{eq:model}, where  $(\tilde{p}_1, \ldots ,\tilde{p}_d ) \sim \Qcr$ is obtained by normalizing  a vector of CoRMs. Suppose that  $\bm X_j:= (X_{1,j}, \ldots , X_{n_j,j})$, as $j=1, \ldots, d$, are samples from the partially exchangeable model \eqref{eq:model}, and $X_1^*, \dots, X_k^*$ are the distinct values out $(\bm X_1 ,\dots, \bm X_d)$. Then, the following distributional equality holds true
\begin{equation}
\label{eq:posterior}
(\tilde{\mu}_1, \ldots, \tilde{\mu}_d) \mid \{\bm{X}_j, U_j\}_{j=1}^d \stackrel{d}{=} (\tilde{\mu}_1^\prime, \ldots, \tilde{\mu}_d^\prime) + \sum_{\ell=1}^k (T_{\ell, 1}, \ldots, T_{\ell, d}) \sigma_\ell \delta_{X_\ell^*} ,
\end{equation}
where:
\begin{itemize}
\item[(i)]  $(\tilde{\mu}_1^\prime, \ldots, \tilde{\mu}_d^\prime)$ is a vector of dependent random measures represented as
\begin{equation}\label{eq:post_measures}
\tilde{\mu}_j^\prime \mid \tilde{\eta}^\prime = \sum_{i \geq 1} m_{j,i}^\prime  J_i^\prime \delta_{\tilde{x}_i ^\prime} , \quad \tilde{\eta}^\prime = \sum_{i \geq 1} J_i^\prime \delta_{\tilde{x}_i^\prime}
\end{equation}
being $m_{j,i}^\prime \mid J_i^\prime$ independent with density $ f^\prime(m  \vert J_i^\prime) \propto e^{-U_j m J_i^\prime } f(m)$, and $\tilde{\eta} ^\prime$ is a CRM having L\'evy intensity 
\[
\nu ^\prime (\D s) \alpha (\D x) = \prod_{j=1}^d  \int_{\R_+} e^{-U_j m s} f (m) \D m \nu^* (\D s) \alpha (\D x );
\]
\item[(ii)]  the vectors of jumps $(T_{\ell, 1}, \ldots, T_{\ell, d})$, as $\ell=1, \ldots, k$, are independent having distribution $T_{\ell, j}  \mid \sigma_\ell \simind  \varphi(t_{\ell,j} \mid \sigma_\ell)$ and  $\sigma_\ell \sim \xi_\ell (\, \cdot \, )$. Specifically, $\varphi(t \mid s)\propto e^{- s U_j t} t^{n_{\ell, j}} f(t)$
is the density of $T_{\ell, j} \mid \sigma_\ell = s$ whereas 
\[
\xi_\ell(s) \propto  \prod_{j=1}^d  \int_{\R_+}  e^{-m s U_j} m^{n_{\ell, j}} f (m) \D m s^{n_{\ell, j}} \: \nu^* (\D s) 
\]
is the density of $\sigma_\ell$, for $\ell =1, \ldots, k$.
\end{itemize}
\end{theorem}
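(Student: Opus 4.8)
The plan is to establish the posterior representation by working with the augmented pEPPF from Proposition~\ref{prop:eppf} and computing the posterior Laplace functional of the vector $(\tilde\mu_1,\ldots,\tilde\mu_d)$ conditionally on the data and the augmentation variables $U_1,\ldots,U_d$. Since a vector of CoRMs is a Completely Random Vector with a known L\'evy intensity $\rho_d$, the strategy is to show that the conditional Laplace transform factorizes into a part corresponding to a new CRV with an exponentially tilted L\'evy intensity (the \emph{non-fixed} part, giving rise to $(\tilde\mu_1',\ldots,\tilde\mu_d')$) and a product of finitely many terms corresponding to the fixed jumps at the observed distinct values $X_1^*,\ldots,X_k^*$. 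This is the multivariate/CoRM analogue of the classical posterior decomposition for normalized CRMs of \citet{Jam09}, and the augmentation $U_j=\Gamma_{n_j}/\tilde\mu_j(\X)$ is precisely the device that removes the normalization and makes the CRV structure exploitable.

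First I would fix a vector of nonnegative test functions $\bm g=(g_1,\ldots,g_d)$ and compute, via Bayes' theorem, the conditional expectation
\begin{equation*}
\E\Bigl[\exp\Bigl\{-\sum_{j=1}^d\int_\X g_j(x)\tilde\mu_j(\D x)\Bigr\}\;\Big|\;\{\bm X_j,U_j\}_{j=1}^d\Bigr]
\end{equation*}
by weighting the prior Laplace functional of the CRV against the likelihood contribution. The key algebraic step is to insert the gamma-integral identity $\tilde\mu_j(\X)^{-n_j}=\Gamma(n_j)^{-1}\int_0^\infty u_j^{n_j-1}e^{-u_j\tilde\mu_j(\X)}\D u_j$ to linearize the normalizing constants, exactly as is already done in deriving \eqref{eq:EPPF}; this converts the problematic ratio $\prod_j\tilde p_j^{n_{\ell,j}}$ into an expression involving only the unnormalized $\tilde\mu_j$ and the exponential factors $e^{-u_j\tilde\mu_j(\X)}$. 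One then splits the atoms of the underlying Poisson process into those carrying the $k$ observed distinct values and the remaining infinitely many, and applies the Poisson partition calculus (the Palm formula / Campbell–Mecke argument): the exponential terms $e^{-u_j m s}$ tilt the directing measure $\nu^*$ and the score density $f$, while the monomials $(ms)^{n_{\ell,j}}$ produced by the likelihood at each shared atom are absorbed into the fixed-jump distributions.

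The core of the argument is then to read off the two pieces of the decomposition. The tilted L\'evy intensity of the diffuse part emerges as $\prod_{j=1}^d\int_{\R_+}e^{-U_j m s}f(m)\,\D m\,\nu^*(\D s)\,\alpha(\D x)$, which is precisely $\nu'(\D s)\alpha(\D x)$ in part~(i); one checks that the conditional score of $\tilde\mu_j'$ given its common jump $J_i'$ acquires the extra factor $e^{-U_j m J_i'}$, yielding $f'(m\mid J_i')\propto e^{-U_j m J_i'}f(m)$ as claimed. For the fixed jumps at each $X_\ell^*$, the joint contribution of the marginal factors $\int_0^\infty e^{-u_j m s}(ms)^{n_{\ell,j}}f(m)\D m$ integrated against $\nu^*(\D s)$ must be disintegrated into a conditional law for the $d$-vector of jump sizes given the common latent $\sigma_\ell$ and a marginal law for $\sigma_\ell$; tracing the $s$- and $m$-dependence gives $\varphi(t\mid s)\propto e^{-sU_j t}t^{n_{\ell,j}}f(t)$ and the density $\xi_\ell$ displayed in part~(ii), and establishes independence across $\ell$ because distinct atoms correspond to disjoint points of the Poisson process.

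The main obstacle I anticipate is the bookkeeping of the \emph{shared} latent variable $\sigma_\ell$ for each fixed jump: because a CoRM couples all $d$ coordinates through a single jump $J_i$ of the driving CRM $\tilde\eta$, the posterior jump at $X_\ell^*$ is not a vector of independent coordinates but a vector whose coordinates are conditionally independent \emph{given} $\sigma_\ell$ (the posterior analogue of that common jump). Correctly separating the $s$-integration that defines $\xi_\ell$ from the per-coordinate densities $\varphi(\cdot\mid s)$, while verifying that the proportionality constants in $f'$, $\varphi$, and $\xi_\ell$ combine to reconstruct the augmented pEPPF normalizing constant of \eqref{eq:EPPF}, is the delicate part; everything else reduces to the standard exponential-tilting and Poisson-disintegration manipulations. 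I would therefore organize the proof so that the consistency check against \eqref{eq:EPPF} is performed explicitly, as it simultaneously pins down all normalizations and confirms the decomposition.
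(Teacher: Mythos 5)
Your proposal follows essentially the same route as the paper's proof: compute the posterior Laplace functional of the unnormalized vector, use the gamma-integral augmentation $\tilde\mu_j(\X)^{-n_j}=\Gamma(n_j)^{-1}\int_0^\infty u_j^{n_j-1}e^{-u_j\tilde\mu_j(\X)}\,\D u_j$ to linearize the normalization, and split the result into an exponentially tilted CRV part plus $k$ fixed-jump terms whose coordinates are conditionally independent given a shared latent $\sigma_\ell$, with the denominator of the Bayes ratio being exactly the pEPPF of Proposition~\ref{prop:eppf}. The only cosmetic difference is that the paper implements the "likelihood weighting'' rigorously as a limit of ratios over shrinking balls $A_\epsilon(x_\ell^*)$ and identifies the two factors via two auxiliary Laplace-functional results (Proposition~\ref{prop:prior_laplace} and Lemma~\ref{lem:discrete_comp}) rather than invoking the Palm/Campbell--Mecke formula, but the computations are equivalent.
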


The proof of Theorem \ref{thm:posterior_CORMS} is deferred to Appendix~\ref{proof:th1}. From point (i) of Theorem \ref{thm:posterior_CORMS}, it is apparent that the posterior of the vector $(\tilde \mu_1, \dots, \tilde \mu_d)$, restricted to the set $\X^\prime =  \X \setminus \{X_1^*, \dots, X_k^*\}$, is still a CRV 
having an updated multivariate  L\'evy intensity. In addition, $(\tilde{\mu}_1^\prime, \ldots, \tilde{\mu}_d^\prime)$ resembles a vector of CoRMs, with the only exception that the score distribution depends on the weights of the underlying CRM $\tilde\eta^\prime$. This posterior representation can be considered a natural extension of the one for  NRMIs \citep[see, e.g., Theorem 1 of][]{Jam09} to the partially exchangeable setting. Further, the full conditional distributions of the random variables $U_1, \dots, U_d$ resemble the one of the NRMIs, case described in \citet{Jam09}. An explicit expression is given in Appendix~\ref{app:algo}. Finally, suitable specifications of the directing L\'evy intensity and of the score distribution lead to different types of CoRMs, and as a consequence to specific examples of Theorem \ref{thm:posterior_CORMS}.  
A relevant case of interest is studied in the subsequent section, where we concentrate on gamma scores. We refer to \citet{Gri17} for a discussion of other tractable examples.

\subsection{An example with gamma scores} \label{sec:example}

In the present section we specialize Theorem \ref{thm:posterior_CORMS} when the scores are gamma distributed, and the L\'evy measure equals the one associated with a $\sigma$-stable process, i.e.,
\begin{equation} \label{eq:choice_f_nu}
\begin{split}
f(x) & = \frac{1}{\Gamma (\phi)} \: x^{\phi-1} e^{-x}, \quad x >0,\\
\nu^* (\D z) & = \frac{\sigma \Gamma (\phi)}{\Gamma (\sigma+\phi) \Gamma (1-\sigma)} \: z^{-1-\sigma} \D z , \quad z >0,
\end{split}
\end{equation}
where $\phi >0$, $\sigma \in (0,1)$ are parameters. Thus, the induced marginal CRMs $\tilde{\mu}_j$ are $\sigma$-stable processes \citep{Gri17} with L\'evy intensity
\[
\nu_j (\D s) = \int_0^\infty  z^{-1} f (s/z) \nu^* (\D z ) \D s = \frac{s^{-1-\sigma} \sigma}{\Gamma (1-\sigma)} \D s , \quad s >0.
\]
Without loss of generality, we further suppose that the total mass of the centering measure $\alpha$ is equal to  $1$, namely $\alpha$ is a probability measure.
A model specified as in \eqref{eq:choice_f_nu} combines the flexibility of the stable driven L\'evy intensity with a tractable distribution for the scores. The parameter $\sigma$ is mainly tuning the allocation of the mass on the jumps of the driven L\'evy measure $\tilde \nu$, while the parameter $\phi$ is controlling the dependence between the probability measures $\tilde p_j$s and therefore the borrowing of information. Indeed, the coefficient of variation of $f$ is equal to $\phi^{-\frac{1}{2}}$, so that the dependence as measured by the correlation in \eqref{eq:corr} is increasing in $\phi$ and can take arbitrary positive values. 
We can investigate more deeply the effect of the parameters $\sigma$ and $\phi$ on the dependence induced by CoRMs in~\eqref{eq:choice_f_nu} by computing the expected the Kullback-Leibler divergence $\mathrm{KL}(\tilde p, \tilde p_j)$ of the generic group-specific probability measure $\tilde p_j$ from the baseline measure $\tilde p = \tilde \eta / \tilde \eta(\X)$. This can be expressed in an analytical form as a function of the parameters $\sigma$ and $\phi$, with
\begin{equation}\label{eq:KL_div}
\E \left[ \mathrm{KL} \left( \tilde p, \tilde p_j \right) \right] = - \psi (\phi) +\frac{1}{\sigma} \log\left(\frac{\Gamma (\phi +\sigma)}{\Gamma(\phi)}\right),
\end{equation} 
where $\psi$ is the digamma function. See Appendix~\ref{proof:KLdiv} for a derivation of Equation~\eqref{eq:KL_div}. For any given $\sigma$, the divergence can be shown to be a decreasing function of $\phi$, ranging from zero to infinity. This indicates that by varying $\phi$, one can obtain group-specific distributions that are arbitrarily close or distant from the baseline distribution, thus confirming the flexibility of the model. An illustration of Equation~\eqref{eq:KL_div} as function of $\phi$ is provided in Figure~\ref{fig:exp_KL}, for different values of $\sigma$.
\begin{figure}[!ht]
\centering
\includegraphics[width = 0.59\textwidth]{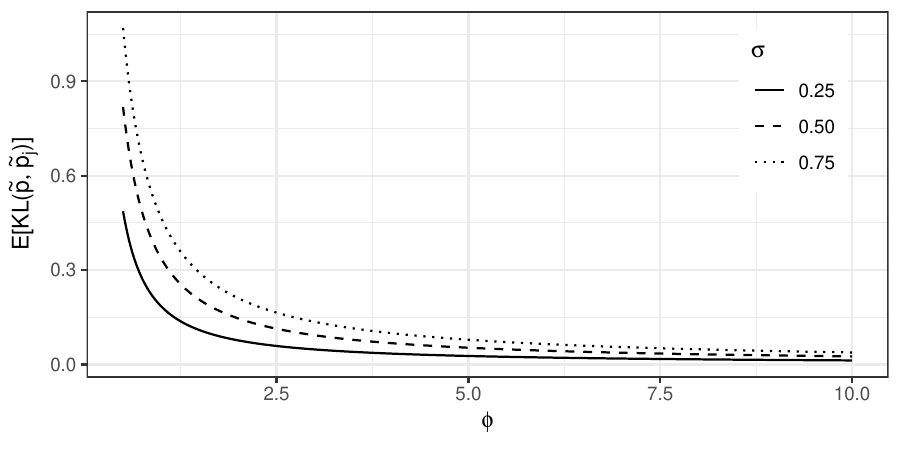}
\caption{Expected Kullback-Leibler divergence as function of $\phi$, for different values of $\sigma$.}\label{fig:exp_KL}
\end{figure}

We now apply Theorem~\ref{thm:posterior_CORMS} to obtain explicit expressions for the posterior distribution of the vector of CoRMs in the particular case under study. 
\begin{corollary} \label{cor:post_stable}
Consider the model \eqref{eq:model}, where  $(\tilde{p}_1, \ldots ,\tilde{p}_d ) \sim \Qcr$ is obtained by normalizing  a vector of CoRMs, with gamma scores and a stable directing L\'evy intensity, as in \eqref{eq:choice_f_nu}. Suppose that  $\bm X_j:= (X_{1,j}, \ldots , X_{n_j,j})$, as $j=1, \ldots, d$, are samples from the partially exchangeable model \eqref{eq:model}, and $X_1^*, \dots, X_k^*$ are the distinct values out $(\bm X_1 ,\dots, \bm X_d)$. Then, the posterior distribution of $(\tilde{\mu}_1, \ldots, \tilde{\mu}_d)$ satisfies Equation \eqref{eq:posterior}, where:
\begin{itemize}
\item[(i)]  $(\tilde{\mu}_1', \ldots, \tilde{\mu}_d')$ is a vector of dependent random measures represented as in Equation~\eqref{eq:post_measures},
being $m_{j,i}' \mid J_i'$ independent with distribution $\textsc{Gamma} (\phi, U_j J_i' +1)$, and $\tilde{\eta}'$ is a CRM having L\'evy intensity 
\[
\nu ' (\D s) \alpha (\D x) = \frac{s^{-1 - \sigma}\sigma\Gamma(\phi)}{\Gamma(\phi + \sigma) \Gamma(1 - \sigma)} \prod_{j=1}^d \frac{1}{(sU_j + 1)^\phi} \: \D s  \alpha (\D x );
\]
\item[(ii)] conditionally on the random variables $\sigma_\ell$s, the vectors of jumps $(T_{\ell, 1}, \ldots, T_{\ell, d})$, for $\ell=1, \ldots, k$, are independent and the $j$th component $T_{\ell, j}$ is distributed as a $\textsc{Gamma}(\phi +n_{\ell, j}, \sigma_\ell U_j +1) $. Besides, the distribution of $\sigma_\ell$ is characterized by the following density on $\R_+$
\begin{equation*}
	g_\ell (s)  \propto  s^{n_{\ell, \bullet}-1 - \sigma}  \prod_{j=1}^d \frac{1}{(s U_j +1)^{n_{\ell, j}+\phi}} 
	\label{eq:sigma_ell}, \qquad \ell =1, \ldots, k,
\end{equation*}
where $n_{\ell, \bullet} = \sum_{j = 1}^d n_{\ell, j}$.
\end{itemize}
\end{corollary}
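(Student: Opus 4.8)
The plan is to obtain Corollary~\ref{cor:post_stable} as a direct specialization of Theorem~\ref{thm:posterior_CORMS}: I would substitute the gamma density $f$ and the stable directing measure $\nu^*$ from \eqref{eq:choice_f_nu} into items (i) and (ii) of the theorem, and evaluate the resulting one-dimensional integrals, each of which is an elementary gamma integral. No new probabilistic argument is needed; the content is purely the recognition of exponentially tilted gamma kernels and of Laplace transforms of gamma densities.

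For item (i), I would first treat the score distribution. By Theorem~\ref{thm:posterior_CORMS}(i), $m_{j,i}' \mid J_i'$ has density proportional to $e^{-U_j m J_i'} f(m) \propto m^{\phi-1} e^{-m(U_j J_i' + 1)}$, which I recognize as the $\textsc{Gamma}(\phi, U_j J_i' + 1)$ kernel, giving the claimed conditional law. For the updated Lévy intensity I would compute the tilting factor $\int_0^\infty e^{-U_j m s} f(m)\,\D m = (U_j s + 1)^{-\phi}$, i.e.\ the Laplace transform of a $\textsc{Gamma}(\phi,1)$ density evaluated at $U_j s$; taking the product over $j$ and multiplying by $\nu^*(\D s)$ yields exactly the stated expression for $\nu'(\D s)\alpha(\D x)$.

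Item (ii) follows the same pattern. The conditional jump density $\varphi(t \mid s) \propto e^{-s U_j t} t^{n_{\ell,j}} f(t) \propto t^{\phi + n_{\ell,j} - 1} e^{-t(s U_j + 1)}$ is the $\textsc{Gamma}(\phi + n_{\ell,j}, \sigma_\ell U_j + 1)$ kernel, as claimed for $T_{\ell,j}$. For the mixing density of $\sigma_\ell$ I would evaluate the moment integral $\int_0^\infty e^{-m s U_j} m^{n_{\ell,j}} f(m)\,\D m = \frac{\Gamma(\phi + n_{\ell,j})}{\Gamma(\phi)}(s U_j + 1)^{-(\phi + n_{\ell,j})}$, multiply over $j$, fold in the factors $s^{n_{\ell,j}}$ (whose product is $s^{n_{\ell,\bullet}}$) together with $\nu^*(\D s) \propto s^{-1-\sigma}\,\D s$, and read off $g_\ell$.

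The only point requiring care, rather than an actual obstacle, is the bookkeeping of $s$-dependence when passing to proportionality: the gamma ratios $\Gamma(\phi + n_{\ell,j})/\Gamma(\phi)$ arising from the moment integrals do not depend on $s$ and are therefore absorbed into the normalizing constant of $g_\ell$, which is precisely why the mixing density collapses to the compact form $s^{n_{\ell,\bullet} - 1 - \sigma}\prod_{j=1}^d (s U_j + 1)^{-(n_{\ell,j} + \phi)}$. I would also note that the constant $\sigma\Gamma(\phi)/[\Gamma(\phi+\sigma)\Gamma(1-\sigma)]$ of $\nu^*$ is retained in $\nu'$, since Theorem~\ref{thm:posterior_CORMS} carries $\nu^*$ itself and not merely a proportionality, whereas it is immaterial for the probability densities $f'$, $\varphi$, and $g_\ell$.
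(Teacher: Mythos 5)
Your proposal is correct and coincides with the paper's (implicit) derivation: the corollary is obtained exactly by substituting the gamma score density and the stable directing measure from \eqref{eq:choice_f_nu} into items (i) and (ii) of Theorem~\ref{thm:posterior_CORMS} and evaluating the resulting elementary gamma integrals, which is precisely what you do. Your computations of the tilted kernels, the Laplace transform $(U_j s+1)^{-\phi}$, the moment integrals, and the remark that the $s$-independent gamma ratios are absorbed into the normalizing constant of $g_\ell$ while the explicit constant of $\nu^*$ must be retained in $\nu'$ are all accurate.
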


We remark that a similar example with gamma scores, but with a beta directing L\'evy measure, leads to another remarkable vector of CoRMs, whose normalized marginals are Dirichlet processes \citep{Gri17}. We provide a posterior analysis of  this specific setting in Appendix~\ref{sec:post_gamma_beta}. We finally point out that,  by virtue of Corollary \ref{cor:post_stable}, we are able to generate trajectories from the posterior distribution of CoRMs, and to perform conditional algorithms for this family of models, as done for example in Section~\ref{sec:comp} on the basis of the Ferguson \& Klass representation \citep{Fer72}.

\section{Nested Compound Random Measures}\label{sec:nested}

In this section, we exploit CoRMs to define a novel class of nested nonparametric priors.
As pointed out before, these processes allow a two-level clustering, both at the observational level and at the level of the group-specific random probability measures. 
According to our proposal, the group-specific random probability measures are normalized random measures, drawn from  an almost surely discrete measure on $\mathsf M_\X$. We further suppose that the atoms of the latter measure are CoRMs. In force of the almost sure discreteness, such a specification allows us to induce ties among the random probability measures associated with the different groups of data. In addition, the resulting model ranges from a full exchangeable case, where all the groups of data are assigned to the same distribution, to a situation of full partial exchangeability across samples, where the groups of data are assigned to different, though dependent, distributions.

We first introduce nested Compound Random Measures (nCoRMs), which will be used to define a new probability $\Qcr$ for the nonparametric model in \eqref{eq:model}. For the ease of exposition, we set $S_{q-1}:= \left\{ (x_1, \ldots, x_q): \, x_i \geq 0, \text{ as } i \geq 1, \, x_1 + \dots + x_q =1 \right\}$.
\begin{definition}[nCoRMs]
A vector of random measures $(\tilde \mu_1, \dots, \tilde \mu_d)$ is said to be a vector of nCoRMs if 
\begin{equation}\label{eq:nesetd_CRM}
\tilde{\mu}_1, \ldots , \tilde{\mu}_d \mid \tilde q\simiid \tilde{q}, \quad \text{with} \quad \tilde{q}= \sum_{s =1}^q \pi_s \delta_{\mu_s},
\end{equation}
where $(\mu_1, \ldots , \mu_q)$ is a vector of CoRMs,
and  $(\pi_1, \dots, \pi_q)$, with $q \geq 1$, is a vector of random weights on the simplex $S_{q-1}.$
\end{definition}
In the sequel we assume the usual representation for the CoRMs $(\mu_1, \ldots , \mu_q)$ of Equation~\eqref{eq:mu_j_def}, 
and that (A1)--(A3) hold true. 
It is now easy to define a vector of nested random probability measures by normalizing nCoRMs and at the same time a new distribution $\Qcr$ for partially exchangeable data in \eqref{eq:model}.
nCoRMs show two main advantages with respect to the current nested proposals in the literature: (i) it is possible to easily derive the theoretical properties of these models, and then (ii) conditional posterior inference can be easily performed with new computational strategies, as we will show in Section~\ref{sec:comp}. 
Moreover, while the atoms of the different components $\mu_s$s are shared, the weights have a flexible dependence structure driven by the latent CRM $\eta$. This is a notable advantage with respect to CAM \citep{Den21}, where the weights are independent.


Note that, without loss of generality, we have assumed that $\tilde{q}$ in \eqref{eq:nesetd_CRM} is a finite-dimensional random probability measure 
in the spirit of \citet{Arg22,Lij23b}. We can possibly replace the specification of $\tilde{q}$ with an infinite-dimensional process, if required, to improve flexibility. Here we suppose that $(\pi_1, \ldots , \pi_q) \sim \Dcr_q (\beta_1, \ldots , \beta_q)$, where $\Dcr_q$ denotes the Dirichlet distribution of order $q$, with density
\begin{equation*}
f(\pi_1, \ldots , \pi_{q-1}) = \frac{\Gamma (\beta_1+\cdots +\beta_q)}{\prod_{j=1}^q \Gamma (\beta_j)} \prod_{j=1}^{q-1} \pi_j^{\beta_j-1}
(1-\pi_1- \cdots-\pi_{q-1})^{\beta_q-1},
\end{equation*}
on $S_{q-1}$. In the following sections we consider the symmetric case, that is to say $\beta_1= \cdots = \beta_q= \beta$. We further denote by $(C_1, \ldots , C_d)$ the cluster assignments of $(\tilde \mu_1, \dots, \tilde \mu_d)$, in other words $C_j =s$ if the $j$th random measure $\tilde{\mu}_j$ equals the $s$th component $\mu_s$ of $\tilde{q}$. Summing up, we are assuming that the observations  $X_{i,j}$s come from the model
\begin{equation} \label{eq:model_nested}
\begin{split}
(X_{i_1, 1}, \ldots , X_{i_d, d}) \mid \tilde{p}_1, \ldots , \tilde{p}_d &\simind \tilde{p}_1 \otimes \cdots \otimes \tilde{p}_d \quad (i_1, \ldots ,i_d) \in \N^d\\
(\tilde{p}_1 , \ldots , \tilde{p}_d) & \sim \Gcr
\end{split}
\end{equation} 
where $\Gcr$ denotes the distribution of a vector of normalized nCoRMS as described before. It is possible to derive explicitly the pEPPF for a model as in \eqref{eq:model_nested}. 
For the sake of simplicity, we focus on the case of $d=2$ groups, but the generalization to an arbitrary number of groups is straightforward. In the following, we denote by $\tau_1$ the probability that two elements sampled from $\tilde q$ coincide, i.e. $\tau_1 = \P (\tilde{\mu}_1 = \tilde{\mu}_2 ) = \sum_{s=1}^q \E[\pi_s^2]$. We further denote by $\Pi_k^{(n)}(\bm{n}_1, \bm{n}_2)$ the pEPPF in \eqref{eq:EPPF} with two groups of partially exchangeable observations, while $\Phi_k^{(n)} (\bm{n}_1+\bm{n}_2)$ stands for the Exchangeable Partition Probability Function (EPPF) in the fully exchangeable case, when all the observations come from an exchangeable model whose underlying random probability measure is a normalized CRM having  L\'evy intensity $\int_{\R_+} z^{-1} f(s/z) \nu^* (\D z) \D s$. As for the latter EPPF, 
\cite{Jam09} proved that it equals
\begin{equation*}
\begin{split}
\Phi_k^{(n)} (\bm{n}_1+\bm{n}_2) = \alpha^k (\X)\int_{\R_+}  \frac{u^{n-1}}{\Gamma (n)}   \exp \left\{  -\int_{\R_+^{2}} (1-e^{-u m s}) f(m) \D m \nu^* (\D s) \alpha (\X)\right\} \\
\times \prod_{\ell=1}^k \int_{\R_+^{2}} e^{-ums } (ms)^{n_{\ell, \bullet }}  f(m) \D m \nu^* (\D s)\: \D u .
\end{split}
\end{equation*}
We are now ready to state the following result.
\begin{theorem}\label{thm:pEPPFnested}
Let $\bm X_j = (X_{1,j}, \dots, X_{n_j,j}), \; j = 1,2$, be partially exchangeable observations as in  \eqref{eq:model_nested}, where 
$(\tilde{p}_1, \tilde{p}_2 ) \sim \Gcr$ is  a vector of normalized  nCoRMs. If  $X^*_1, \dots, X^*_k$ denote the distinct values out $( \bm X_1, \bm X_2) $ with frequency counts $\bm n_j = (n_{1,j}, \dots, n_{k,j}), \; j=1,2$, then,  the pEPPF of the random partition generated by the data, denoted by $\Psi_k^{(n)}(\bm n_1,\bm n_2)$, equals
\begin{equation}
\label{eq:pEPPF_nested}
\begin{split}
\Psi_k^{(n)}(\bm n_1,\bm n_2) = \tau_1 \Phi_k^{(n)} (\bm{n}_1+\bm{n}_2)+ (1-\tau_1)\Pi_k^{(n)}(\bm{n}_1, \bm{n}_2).
\end{split}
\end{equation}

\end{theorem}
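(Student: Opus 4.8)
The plan is to condition on the cluster assignments $(C_1,C_2)$ of the two random measures and to exploit two structural facts about nCoRMs: first, that $\tilde\mu_1=\mu_{C_1}$ and $\tilde\mu_2=\mu_{C_2}$ are obtained by sampling (conditionally i.i.d.\ given $\tilde q$) from the discrete measure $\tilde q=\sum_{s=1}^q\pi_s\delta_{\mu_s}$; and second, that the CoRM components $\mu_1,\dots,\mu_q$ are exchangeable, indeed conditionally i.i.d.\ given the underlying CRM $\tilde\eta$, since under (A2) the score vectors $(m_{1,i},\dots,m_{q,i})$ are i.i.d.\ across $s$. Writing $p_s:=\mu_s/\mu_s(\X)$, the starting point is the definitional identity $\Psi_k^{(n)}(\bm n_1,\bm n_2)=\E[\int_{\X^k}\prod_{\ell=1}^k p_{C_1}(\D x_\ell^*)^{n_{\ell,1}}\,p_{C_2}(\D x_\ell^*)^{n_{\ell,2}}]$, which I would split according to the disjoint events $\{C_1=C_2\}$ and $\{C_1\neq C_2\}$.

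On the event $\{C_1=C_2=s\}$ the two normalized measures coincide, $p_{C_1}=p_{C_2}=p_s$, so the integrand becomes $\prod_{\ell=1}^k p_s(\D x_\ell^*)^{n_{\ell,\bullet}}$ with the merged frequencies $n_{\ell,\bullet}=n_{\ell,1}+n_{\ell,2}$. Since the $\pi_s$'s (hence $C_1,C_2$) are independent of the CoRM vector, and since $p_s$ is a single normalized CoRM, i.e.\ a normalized CRM with marginal L\'evy intensity $\int_{\R_+} z^{-1}f(s/z)\,\nu^*(\D z)\,\D s$ whose law does not depend on $s$ by exchangeability, this contribution factorizes as $\sum_{s=1}^q \P(C_1=C_2=s)\,\E[\int_{\X^k}\prod_\ell p_s(\D x_\ell^*)^{n_{\ell,\bullet}}]$. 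The inner expectation is exactly the exchangeable EPPF $\Phi_k^{(n)}(\bm n_1+\bm n_2)$ recalled before the statement, and $\sum_s\P(C_1=C_2=s)=\sum_s\E[\pi_s^2]=\tau_1$, giving the first summand.

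On the event $\{C_1=s,\,C_2=s'\}$ with $s\neq s'$ the pair $(p_s,p_{s'})$ is, by the exchangeability of the CoRM components, distributed exactly as a vector of two normalized CoRMs $(\tilde p_1,\tilde p_2)$, independently of the particular labels $(s,s')$. Hence $\E[\int_{\X^k}\prod_\ell p_s(\D x_\ell^*)^{n_{\ell,1}}p_{s'}(\D x_\ell^*)^{n_{\ell,2}}]=\Pi_k^{(n)}(\bm n_1,\bm n_2)$, the two-group pEPPF of Proposition~\ref{prop:eppf}. Summing over ordered pairs and again using independence of the weights from the atoms yields $\big(\sum_{s\neq s'}\E[\pi_s\pi_{s'}]\big)\Pi_k^{(n)}(\bm n_1,\bm n_2)$, and the identity $\sum_{s\neq s'}\E[\pi_s\pi_{s'}]=\E[(\sum_s\pi_s)^2-\sum_s\pi_s^2]=1-\tau_1$ (because $\sum_s\pi_s=1$) produces the second summand. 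Adding the two contributions gives \eqref{eq:pEPPF_nested}.

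The routine parts are the factorizations, which rest only on the independence between the Dirichlet weights and the CoRM atoms. The step requiring the most care is the identification of the two conditional laws: that a single component $p_s$ is a normalized CRM with the stated marginal intensity, so that the merged-frequency integral is genuinely the fully exchangeable EPPF $\Phi_k^{(n)}$, and that any distinct pair $(p_s,p_{s'})$ reproduces the joint law of the base two-dimensional normalized CoRM vector, so that the integral is genuinely $\Pi_k^{(n)}$. Both follow from the conditionally i.i.d.\ structure of the scores across the $q$ components, but this exchangeability argument, together with the bookkeeping that ties across samples are already encoded in $\Pi_k^{(n)}$ and in the shared atoms, is the crux that must be made explicit.
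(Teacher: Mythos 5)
Your proposal is correct and follows essentially the same route as the paper's proof: conditioning on which atoms of $\tilde q$ are selected by the two measures, splitting into the diagonal ($C_1=C_2$) and off-diagonal ($C_1\neq C_2$) contributions with weights $\tau_1=\sum_s\E[\pi_s^2]$ and $1-\tau_1$, and identifying the two conditional laws with $\Phi_k^{(n)}$ and $\Pi_k^{(n)}$ via the exchangeability of the CoRM components. The only (minor) difference is that you spell out explicitly the identification of a single component as a normalized CRM and of a distinct pair as a two-dimensional normalized CoRM vector, a step the paper compresses into a remark about the factorized, identically distributed score marginals.
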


Similar results can be found in \citet{Cam19} and \citet{Den21}. We report a direct proof of Theorem~\ref{thm:pEPPFnested} in Appendix \ref{proof:thm_nested}. It is apparent from Theorem~\ref{thm:pEPPFnested} that the probability $\tau_1$ of having $\tilde \mu_1 = \tilde \mu_2$ governs the cluster structure of the model. As far as $\tau_1$ increases, the model becomes closer to a fully exchangeable situation, where the two groups of data are homogeneous. On the other hand, when $\tau_1$ decreases to $0$, the pEPPF $\Pi_k^{(n)} (\bm{n}_1,\bm{n}_2)$  in Equation \eqref{eq:pEPPF_nested} is the dominant term, and the model both preserves heterogeneity and borrowing of information across groups.

We finally point out that, in order to obtain a more manageable expression for the pEPPF in the general setting $d\geq 2$, it is convenient to work with an augmented pEPPF, with the use of the suitable  latent label indicators $(C_1, \ldots , C_d)$ introduced before. Hence, one can work with the augmented pEPPF, defined as follows:
\begin{equation}
\label{eq:pEPPF_aug}
\E \left[ \prod_{s=1}^q \pi_s^{m_s} \right]\int_{\X^k} \E \left[  \prod_{j=1}^d \prod_{\ell =1}^k  \frac{\mu_{C_j}^{n_{\ell, j}} (\D x_\ell^*)}{\mu_{C_j}^{n_{\ell, j}} (\X )} \right],
\end{equation}
being $m_s = \sum_{j=1}^d \indic_{\{ s \}} (C_j)$, as $s=1,\ldots , q$. Note that the integral in \eqref{eq:pEPPF_aug} is nothing but the pEPPF determined in Proposition \ref{prop:eppf} for vectors of CoRMs, but conditionally on the latent label indicators $(C_1,\dots, C_d)$. The first expected value in \eqref{eq:pEPPF_aug} represents the probability distribution of the random partition governed by $\tilde q$, which can be easily written in the Dirichlet case \citep{Arg22}.

\section{Ferguson \& Klass algorithms for mixture models}\label{sec:comp}


Here we embed the model \eqref{eq:model} in a mixture setting to face density estimation and clustering. We develop suitable Ferguson \& Klass algorithms both in the case of normalized CoRMs and normalized nCoRMs. 
Let us first introduce the notation: we denote by $\kernel(\cdot \mid \cdot ):\Y \times \Theta \to \mathbb R_+$  a suitable kernel function, where $\Y$ and $\Theta$ are support and parameter spaces, respectively. 

\subsection{CoRMs mixture models}

We first consider a dependent mixture model by convoluting $\kernel(\cdot \mid \cdot )$ with respect to a vector of normalized CoRMs.
The full model specification, in its hierarchical form, is given by
\begin{equation}\label{eq:mod_mixture}
	\begin{split}
		Y_{i,j}\mid \theta_{i,j} &\simiid \mathcal \kernel(\, \cdot \, \mid \theta_{i,j}),\quad i=1,\dots,n_j, \;j=1,\dots,d\\
		\theta_{i,j} \mid \tilde p_j &\simind \tilde p_j, \quad  i=1,\dots,n_j, \;j=1,\dots,d\\
		(\tilde p_1, \dots, \tilde p_d) &\,\sim \Qcr,
	\end{split}
\end{equation} 
where $\Qcr$ is a vector of normalized CoRMs.
Such a model is useful to perform clustering analysis and density estimation in a partially exchangeable setting, borrowing information across different groups of data through the sets of latent parameters.  
In the following, we assume gamma distributed scores and a stable directing L\'evy intensity function, as in~\eqref{eq:choice_f_nu}. Here, we propose a novel Ferguson \& Klass algorithm \citep{Fer72} to address posterior inference: this is possible by virtue  of Corollary \ref{cor:post_stable}.  We remark that the  Ferguson \& Klass algorithm is a conditional method that, at each iteration, generates the trajectories of the infinite dimensional random measures. As a consequence,  uncertainty quantification is more reliable  with respect to traditional marginal methods \citep{Gri17}. The Ferguson \& Klass representation also allows to generate the weights of the CRM in a decreasing order, thus controlling the size of the neglected weights.  

\begin{breakablealgorithm}\label{algo:compound}
	\caption{Ferguson \& Klass algorithm for mixtures of CoRMs with gamma scores and stable directing L\'evy measure.}
	\begin{algorithmic}
		\STATE[{\bf 0}] Set initial values for $\varepsilon$, $\theta_{i,j}$ with $i = 1, \dots, n_j$ and $j=1,\dots,d$, $\sigma$, $\bm \sigma_{1:k}$, $\bm U_{1:d}$, $\phi$ plus the parameter of $\alpha(\cdot)$ and of its hyperpriors.

		\FOR{$r=1, \dots,  R$}
		
		\STATE[{\bf 1}] Update 
		\[
		\tilde \eta^\prime(\cdot)  \approx \sum_{i=1}^{I^\varepsilon} J_i^\prime \delta_{\tilde \theta_{i}^\prime}(\cdot)
		\]
		from its posterior distribution, exploiting the Ferguson \& Klass representation, where $\tilde \theta_i^\prime \simiid \alpha(\cdot)$, and the generic $J_i^\prime$ is obtained by inverting the L\'evy intensity at the $i$th waiting time of a standard Poisson process, and $I^{\varepsilon}$ is the largest integer $i$ such that $J_i^\prime>\varepsilon$.
		
		\STATE[{\bf 2}] Update the CRMs $(\tilde \mu_1, \dots, \tilde \mu_d)$, conditionally on $\tilde \eta^\prime$ and the distinct values  $\{\theta_\ell^*\}_{\ell = 1}^k$ out of the $\theta_{i,j}$s, with 
		\[
		\tilde{\mu}_j \approx \sum_{i=1}^{I^\varepsilon} m_{j,i}' J_i' \delta_{\tilde{\theta}_i'} + \sum_{\ell=1}^k T_{\ell, j} \sigma_\ell \delta_{\theta_\ell^*}
		\]
		where $m_{j,i}^\prime \mid - \sim \textsc{Gamma}(\phi, U_j J_i^\prime + 1)$, $T_{\ell,j}^\prime \mid - \sim \textsc{Gamma}(\phi + n_{\ell, j}, \sigma_\ell U_j  + 1)$, and the full conditional distribution of $\sigma_\ell$ on $\R_+$ is given by
		$$\xi_\ell(s\mid - ) \propto s^{n_{\ell,\bullet} - 1 - \sigma} \prod_{j=1}^d (sU_j + 1)^{-(n_{\ell,j} + \phi)}.$$
		
		\STATE[{\bf 3}] Update $\sigma$ and $\phi$ from their full conditional distributions, by performing Metropolis-Hastings steps.
		
		\STATE[{\bf 4}] Update the latent variables $\theta_{i,j}$s, where the generic $(i,j)$th element is distributed according to 
		\[
		\Law(\theta_{i,j}\mid -) \propto \sum_{r=1}^{I^\varepsilon} m_{j,r}^\prime J_i^\prime \kernel(y_{i,j}\mid \tilde \theta_r^\prime) \delta_{\tilde \theta_r^\prime}(\theta_{i,j}) + \sum_{\ell = 1}^k T_{\ell, j}\sigma_\ell \kernel(y_{i,j}\mid \theta_{\ell}^*)\delta_{\theta_{\ell}^*}(\theta_{i,j}).
		\]
		
		\STATE[{\bf 5}] Update the augmented variables $(U_1, \dots, U_d)$, with the generic $j$th element distributed as $U_j \mid - \sim \textsc{Gamma}(n_j, \tilde \mu_j(\Theta))$.
		
		\STATE[{\bf 6}] (Acceleration step) update the distinct values $\theta_\ell^*$, where 
		\[
		\Law(\D \theta_\ell^* \mid -) \propto \alpha(\D \theta_\ell^*)\prod_{j=1}^d \prod_{i \in C_{j,\ell}} \kernel(y_{i,j}\mid \theta_\ell^*).
		\]
		
		\STATE[{\bf 7}] (Hyper-acceleration step) update the parameters of $\alpha(\cdot)$.
		\ENDFOR
	\end{algorithmic}
\end{breakablealgorithm}

Algorithm \ref{algo:compound} shows the pseudo-code for generic choices of $\kernel(\cdot\mid\cdot)$ and $\alpha(\cdot)$. Steps $1-2$ are the core of our proposal: here we generate the trajectories of the vector $(\tilde \mu_1, \ldots , \tilde \mu_d)$ from its posterior distribution. In step $1$ we approximate $\tilde \eta '$ via the Ferguson \& Klass algorithm; in step $2$, conditionally on $\tilde \eta '$, one generates the trajectories of the random measures associated with each group.
Steps $6-7$ of Algorithm \ref{algo:compound} are not mandatory, but they strongly improve the mixing performance of the algorithm. We further note that for suitable choices of the kernel function $\kernel(\cdot\mid \cdot)$, the centering measure $\alpha(\cdot)$ and hyper-distribution on the parameters of $\alpha(\cdot)$, steps $6-7$ are available in a closed form. We refer to Appendix~\ref{app:algo} for a more detailed explanation of Algorithm \ref{algo:compound}, where we focused on a Gaussian kernel $\kernel(\cdot\mid \cdot)$ and the centering measure $\alpha(\cdot)$ is a normal-inverse-gamma. Algorithm \ref{algo:compound} can be generalized to other choices of the score distribution and the driven L\'evy intensity different from the ones described in \eqref{eq:choice_f_nu}, by simply adapting the first two steps.

Appendix~\ref{sec:sim_comp} shows a simulation study for the use of Algorithm~\ref{algo:compound}, in different scenarios, where we simulated groups of data from different mixtures of Gaussian distributions. From the synthetic experiments, we can appreciate that the accuracy of the density estimates improves when the sample size increases. On the other side, the point estimates of the latent partition are more reliable when the components in the data generating process are well separated. Further, in Appendix~\ref{app:sky}, we discuss an application of the model described in \eqref{eq:mod_mixture} to groups of observations coming from the Sloan Digital Sky Survey first data release \citep{Aba03}. We also provide there an estimation of the latent shared mixture model derived from the baseline distribution.

\subsection{nCoRMs mixture models} \label{sec:nCoRMs_mixture}
Similarly to \eqref{eq:mod_mixture}, we can embed a vector of nCoRMs as \eqref{eq:nesetd_CRM} in a mixture model. 
In practice, we need to replace the distribution of the mixing measures $\Qcr$ in~\eqref{eq:mod_mixture} with the distribution of a vector of normalized nCoRMs that we indicate by $\Gcr$. We can then extend the sampling strategy described in Algorithm~\ref{algo:compound}, based on the posterior representation of Section~\ref{sec:post}, to the nested case, 
with the additional advantage that we can cluster together groups of observations whose underlying mixing distributions look similar. 
More precisely, we consider Algorithm~\ref{algo:compound} with two additional  steps in order to update the cluster indicators $(C_1, \ldots , C_d)$ and the weights $(\pi_1, \ldots , \pi_q)$ of the random probability $\tilde{q}$. Algorithm~\ref{algo:nested} describes these two additional steps. Note that one can change the distribution of  $\tilde q$ by simply modifying  step $8$ of the algorithm.


\begin{breakablealgorithm}
	\caption{Ferguson \& Klass algorithm for mixtures of nCoRMs with gamma scores and stable directing L\'evy measure.}\label{algo:nested}
	\begin{algorithmic}
		\STATE[{\bf 0}] Set initial values for $\varepsilon$, $\theta_{i,j}$ with $i = 1, \dots, n_j$ and $j=1,\dots,d$, $\sigma$, $\bm \sigma_{1:k}$, $\bm U_{1:d}$, $\bm Z_{1:d}$, $\phi$ plus the parameter of $\alpha(\cdot)$ and of its hyperpriors.

		\FOR{$r=1, \dots,  R$}
		
		\STATE[{\bf 1-7}] Perform steps $1-7$ as in Algorithm \ref{algo:compound}, but conditionally on the allocation variables $C_1, \dots, C_d$. The empty components are updated according to the prior distribution.
		\STATE[{\bf 8}] Update the weights of $\tilde q$, where the full conditional of $(\pi_1, \ldots , \pi_q)$ is a Dirichlet distribution
		\[
		(\pi_1, \ldots , \pi_q) \mid-  \sim \Dcr_d (\beta+m_1, \ldots , \beta+m_q),
		\]
		with $m_s = \sum_{j = 1}^d \mathds 1_{\{s\}} (C_j)$ , for $s = 1, \dots, q$.
		\STATE[{\bf 9}] Update the cluster indicators $C_j$, $j = 1,\dots, d$, where the full conditional distribution of $C_j$ equals
		\[
		\P (C_j = s \mid -)  \propto 
		\pi_s \prod_{i=1}^{n_j} \int_\Theta  \kernel (y_{i,j} \mid \theta) \frac{\mu_s (\D \theta)}{\mu_s (\Theta)}, \quad s = 1,\dots, q.
		\]
		Note that $\mu_s$ is discrete and hence the previous integral may be easily evaluated as a sum of weights.
		\ENDFOR
	\end{algorithmic}
\end{breakablealgorithm}

We validate here the performance of nCoRMs mixture models on both clustering and density estimation through a synthetic example. Specifically, we consider different scenarios where the group-specific data generating process is a mixture of Gaussian distributions. The components of the mixtures are partially or fully shared across different groups of data, possibly with different weights. 
More specifically, we consider the following generating distributions:
\[
\begin{split}
	\bm Y_{A,i} &\simiid \frac{1}{2} \textsc N (\zeta_1, 0.6) + \frac{1}{4} \textsc N (\zeta_2, 0.6) + \frac{1}{4} \textsc N (\zeta_4, 0.6),\\
	\bm Y_{B,i} &\simiid \frac{1}{2} \textsc N (\zeta_1, 0.6) + \frac{1}{2} \textsc N (\zeta_3, 0.6),\\
	\bm Y_{C,i} &\simiid \frac{2}{5} \textsc N (\zeta_1, 0.6) + \frac{1}{5} \textsc N (\zeta_2, 0.6) + \frac{1}{5} \textsc N (\zeta_4, 0.6) + \frac{1}{5} \textsc N (\zeta_5, 0.6),
\end{split}
\]
with group-specific sample size $n_j = 100$, as $j = 1, \dots, d$. The scenarios that we analyse are described below.
\begin{enumerate}
	\item[(i)] Low number of groups, high separated components. We set $d =6$, $\bm\zeta = (0,3,6,10,15)^\intercal$, with $Y_{j,i} \sim Y_{A,i}$ for $j=1,2,3$, $Y_{j,i} \sim Y_{B,i}$ for $j=4,5$, and $Y_{j,i} \sim Y_{C,i}$ for $j=6$. 
	\item[(ii)] Low number of groups, low separated components. We set $d =6$, $\bm\zeta = (0,2,4,6.66,10)^\intercal$, with $Y_{j,i} \sim Y_{A,i}$ for $j=1,2,3$, $Y_{j,i} \sim Y_{B,i}$ for $j=4,5$, and $Y_{j,i} \sim Y_{C,i}$ for $j=6$. 
	\item[(iii)] High number of groups, high separated components. We set $d = 20$, $\bm\zeta = (0,3,6,10,15)^\intercal$, with $Y_{j,i} \sim Y_{A,i}$ for $j=1,\dots,10$, $Y_{j,i} \sim Y_{B,i}$ for $j=11,\dots,20$, and $Y_{j,i} \sim Y_{C,i}$ for $j=21,\dots,30$. 
\end{enumerate}
To mitigate possible simulation-specific distortion, all the scenarios have been replicated $100$ times and the results are averaged over the replications. 
For each scenario, we consider a model as in~\eqref{eq:mod_mixture} where the kernel function is Gaussian distribution, and $\Qcr$ is the distribution of a vector of  normalized nCoRMs. The model specification is completed by considering gamma distributed scores and stable directing L\'evy intensity, with the number of components 
$q = 6$ in scenarios (i)--(ii) and $q = 30$ in scenario (iii). Further, the base measure $\alpha(\cdot)$ equals a normal-inverse-gamma distribution $\textsc{NIG}(m_0, k_0, a_0, b_0)$. Finally, we relax the model specification by setting hyperpriors on the main parameters of the model, in particular regarding the base measure we set $m_0 \sim \textsc N(0, 10)$, $k_0 \sim \textsc{Gamma}(2, 2)$, $a_0 = 2$ and $b_0 \sim \textsc{Gamma}(2, 2)$. As for score distribution and directing L\'evy intensity we consider $\phi =1$, corresponding to a coefficient of variation for the score distribution equal to one, and $\sigma \sim \textsc{Beta}(2,2)$, where $\textsc{Beta} (a,b)$ denotes a beta distribution with parameters $(a,b)$.
We face posterior inference by exploiting Algorithm \ref{algo:nested}. For each replicate, we ran the algorithm for $15\,000$ iterations, including $10\,000$ burn-in iterations. 

In the sequel, we refer to the partition identified by the distribution label indicators $(C_1, \dots, \allowbreak C_d)$ as the \textit{nested partition}, while we refer to the partition associated to the observations simply as \textit{partition}.
Figure \ref{fig:nested_simu} summarizes the results of our simulation studies  in comparison with the current competitor, namely the CAM by \cite{Den21}. In the left panel of Figure \ref{fig:nested_simu}, we focused on the normalized variation of information distance \citep{Mei07} between the posterior point estimate of the latent partition of the data and the true one, where the point estimate is obtained using a decisional approach based on the variation of information loss function \citep{Wad18, Ras18}. The right panel shows the J-divergence, i.e., a version of the symmetrized Kullback-Leibler divergence between the true density functions and the estimated ones, averaged with respect to the different groups of data, with $J(h_1, h_2) = 0.5 KL(h_1\mid h_2)  + 0.5 KL(h_2\mid h_1)$ and $KL(h_1\mid h_2)$ denoting the Kullback-Leibler divergence between $h_1$ and $h_2$.   
It is apparent that the mixture model based on nCoRMs produces overall more accurate posterior point estimates for all the scenarios considered here. Table \ref{tab:VInested} summarizes the quality of the posterior estimates of the nested partitions: here we display the variation of information distance between the true nested partition and the posterior point estimate. 
We can appreciate that over all the scenarios considered, the nCoRMs model performs better than its competitor. In addition, the increase of the number of groups does not deteriorate the performances of the model.

\begin{figure}[h]
	\centering
	\includegraphics[width = 0.75\textwidth]{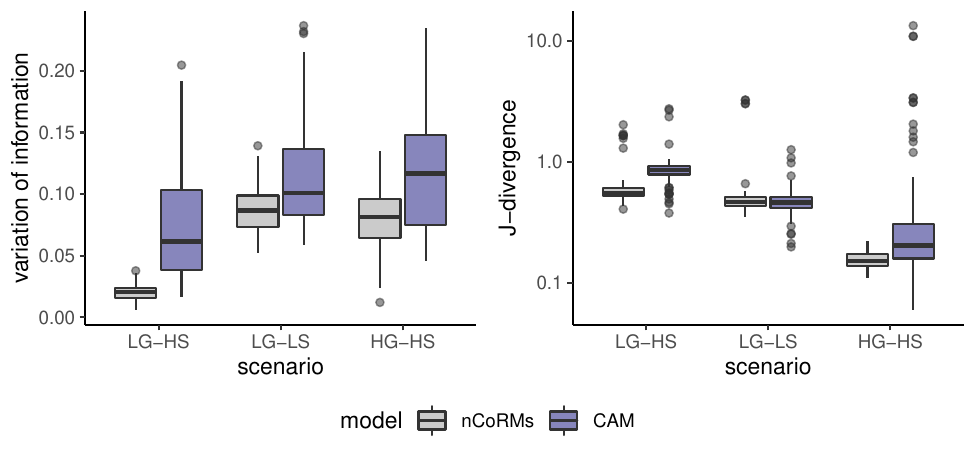}
	\caption{ 
		Left panel: variation of information distance between true and point estimate of the latent partition of the data. Right panel: J-divergence between data generating and posterior point estimate density functions. Different scenarios
		are a combination of low and high number of groups (LG and HG respectively) with low and high separation of the components (LS and HS respectively). 
	}\label{fig:nested_simu}
\end{figure}


\begin{table}[h]
	\centering
	\begin{tabular}{l|ccc}
		model & LG-HS & LG-LS & HG-HS \\ \hline
		nCoRMs & 0.000 & 0.000 & 0.001 \\
		CAM	& 0.008 & 0.014 & 0.053 \\ \hline
	\end{tabular}
	\caption{Variation of information distance between true and estimated nested partitions. The results are averaged over $100$ replications. The scenarios considered in the study
		are a combination of low and high number of groups (LG and HG respectively) with low and high separation of the components (LS and HS respectively). 
	}\label{tab:VInested}
\end{table}	

\section{Analysis of Lombardy $\PM_{10}$ data}\label{sec:pm10}


Particulate matter (PM) is the set of solid particles and liquid droplets that can be found in a gas. These particles are mainly emitted by natural sources, such as fires, soil erosion, pollens, or they can be produced by a result of the human activity, such as industries, cars and combustion in general. Here we consider the daily concentrations of $\PM_{10}$, which stands for  particles with a diameter less or equal than $10$ micrometers, in $d=12$ monitoring stations of the $12$ main towns of Lombardy region in Italy: Bergamo, Brescia, Como, Cremona, Lecco, Lodi, Mantova, Milano, Monza, Pavia, Sondrio and Varese. The observations are the daily measurements of $\PM_{10}$ in $2018$\footnote{The data are freely downloadable at \href{https://www.arpalombardia.it/Pages/Aria/Richiesta-Dati.aspx.}{https://www.arpalombardia.it/Pages/Aria/Richiesta-Dati.aspx.}}, which are indicated by $\{ Y_{i,j} : \; i=1, \ldots , n_j \}$, for monitoring station $j$, as $j=1, \ldots , d$. Further, the observations come from different provinces but in the same region. Thus, it is reasonable to suppose  that data  are partially exchangeable. This amounts to assuming that the levels of $\PM_{10}$ are exchangeable (homogeneous) within the same province and conditionally independent between the provinces. We ignore the seasonality of the data in the following analysis, instead, we focus on the whole distribution of $\PM_{10}$ to illustrate the clustering properties of nCoRMs mixture models. 

We want to investigate the distribution of $\PM_{10}$ across the different towns, possibly clustering together towns with a similar behavior. To this end, we resort to the nCoRMs mixture model as in~\eqref{eq:mod_mixture} with $\Qcr$ being the distribution of a vector of normalized  nCoRMs. We consider gamma distributed scores and stable directing intensity as in~\eqref{eq:choice_f_nu}. Given the nature of the observed variable, which takes values in the positive real line, the model is specified with a log-normal kernel function and standard distributional assumption on its parameters, i.e., with a normal-inverse-gamma base measure $\textsc{NIG}(m_0, k_0, a_0, b_0)$. In the spirit of Section~\ref{sec:nCoRMs_mixture}, we set a priori $m_0 \sim \textsc N(0, 10)$, $k_0 \sim \textsc{Gamma}(2, 2)$, $a_0 = 2$ and $b_0 \sim \textsc{Gamma}(2, 2)$, $\sigma \sim \textsc{Beta}(2,2)$ and $\phi \sim \textsc{Gamma}(2,2)$. We ran the algorithm for $15\,000$ iterations, including a burn-in period of $10\,000$. 

\begin{figure}
\centering
\includegraphics[width=0.45\textwidth]{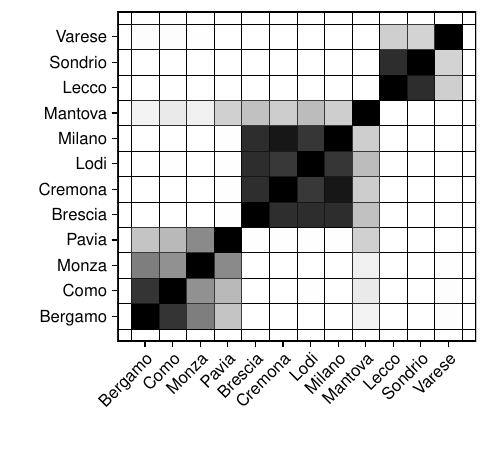}
\includegraphics[width=0.45\textwidth]{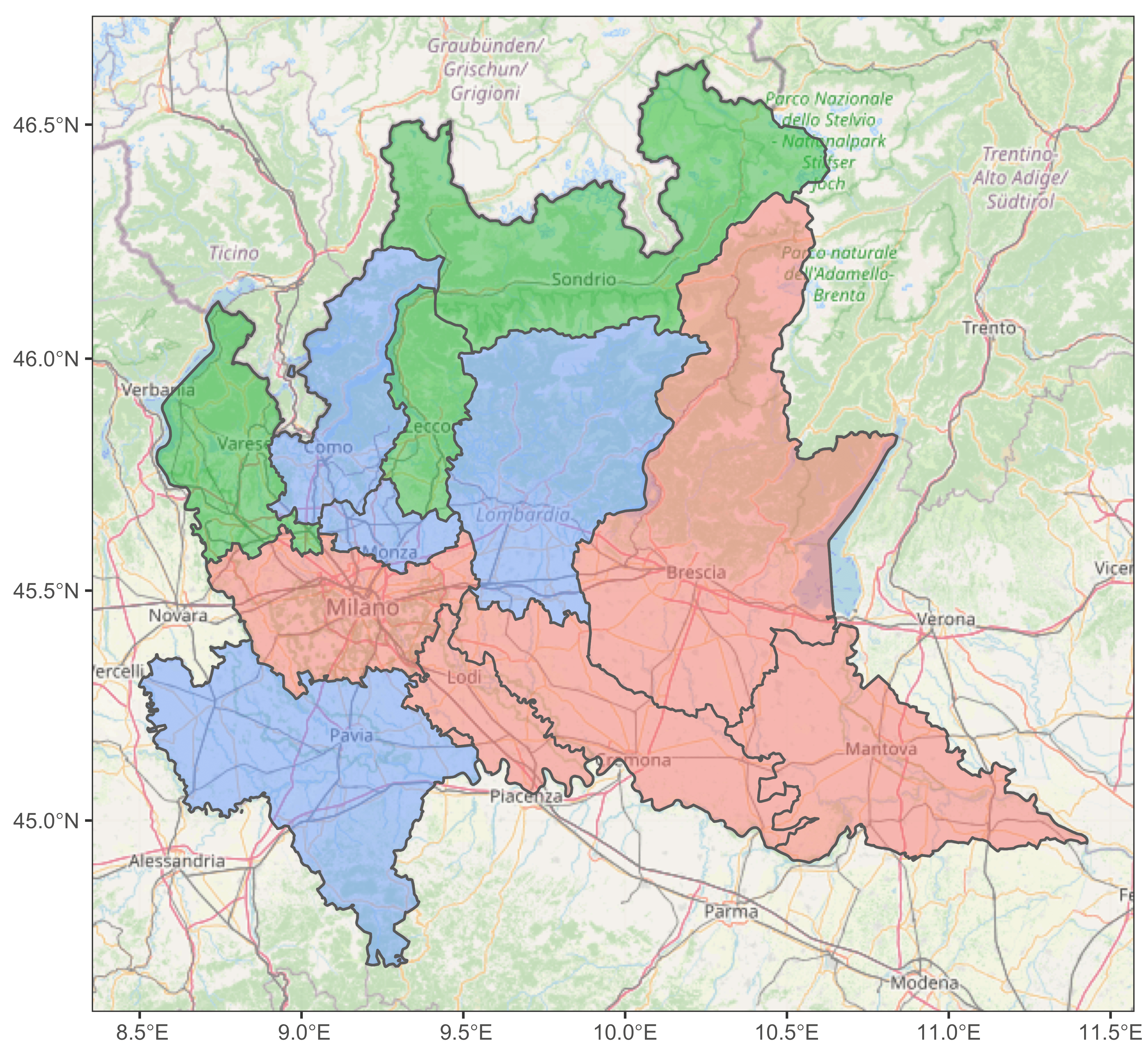}
\caption{Left panel: estimated posterior similarity of the provinces with respect to air quality data in Lombardy. Darker cells correspond to higher similarity. Right panel: geographical representation of the clusters.}\label{fig:heat_PM}
\end{figure}

We derived a point estimate for  the nested partition, which refers to the clustering of the distributions, with a strategy based on the variation of information loss function \citep{Wad18,Ras18}. We recognize three clusters which are identified by the three blocks in the left panel of Figure \ref{fig:heat_PM}, i.e., a first cluster with \{Bergamo, Como, Monza, Pavia\}, a second cluster with \{Brescia, Cremona, Lodi, Milano, Mantova\}, and a third cluster with \{Lecco, Sondrio, Varese\}. The right panel of Figure~\ref{fig:heat_PM} shows the same point estimate of the latent partition on the geographical map of the region. 

According to the European Union's regulations, the daily concentration of $\PM_{10}$ in the air should not exceed $50 \mu g/ m^3$ for more than $35$ days in  a year. 
Thanks to the available posterior trajectories of $(\tilde p_1, \ldots , \tilde p_d)$, produced via the Ferguson \& Klass algorithm, we can also perform inference on functionals of the posterior random densities. In particular, we consider the probability that the concentration of $\PM_{10}$ overcomes the aforementioned critical level of $50 \mu g/ m^3$, in formulas
\[
\tilde Q_{j,50} := \P(Y_{i,j} \geq 50 \mid \tilde p_j) = \int_{50}^\infty \int_\R\int_{\R_+} \kernel(y; \zeta, \sigma^2)\tilde p_j(\D \zeta, \D \sigma^2),
\]
as $j=1,\dots,12$, where $\zeta$ and $\sigma^2$ denote the expectation and the variance on the logarithm scale, respectively. The available posterior trajectories of $\tilde p_j$  allow  to draw samples from  the posterior distribution of $\tilde Q_{j,50}$, and also to provide a numeric approximation of its posterior distribution function.
In  Figure \ref{fig:all_dens}, we report the posterior mean of $\tilde Q_{j,50}$, denote as $\bar Q_{j,50}$,   and the corresponding $95\%$ credible intervals, jointly with the posterior density estimates of the $\PM_{10}$ concentration. It is apparent that the first cluster, \{Bergamo, Como, Monza, Pavia\}, is made of towns with a low probability of overcoming the critical level. The second cluster, \{Brescia, Cremona, Lodi, Milano, Mantova\}, is composed by cities with a high risk of overcoming the critical level of $\PM_{10}$. Finally the last cluster, \{Lecco, Sondrio, Varese\}, contains towns with a low risk of exceeding the threshold of $50 \mu g/ m^3$. Such a clustering is also coherent with the geographical morphology of the area: the cities of Lecco, Sondrio and Varese are closed to the mountains while the towns of Brescia, Cremona, Lodi, Milano and Mantova are on level ground and these areas are highly industrialized.

\begin{figure}[!ht]
\centering
\includegraphics[width=0.95\textwidth]{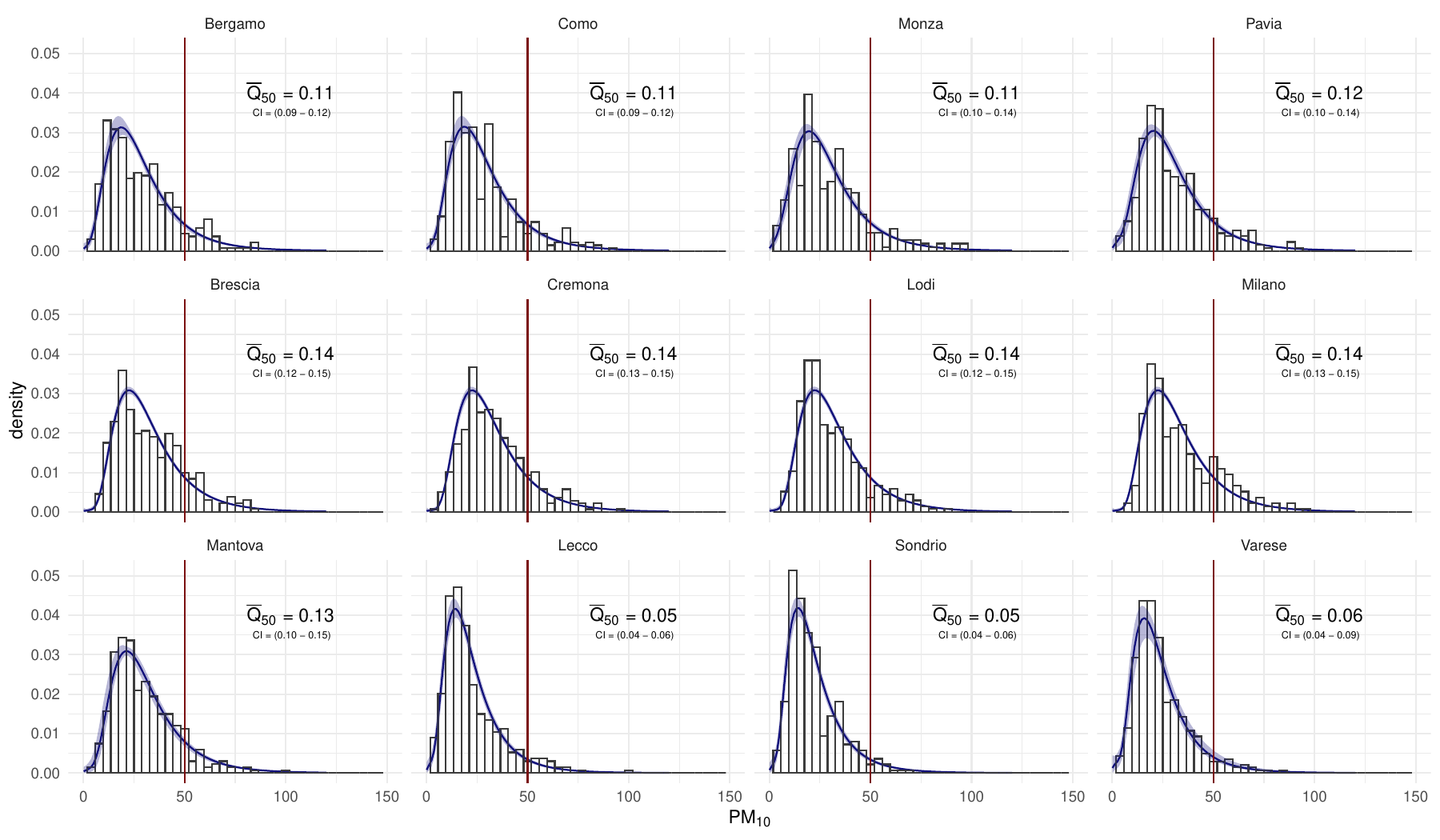}
\caption{Estimated posterior densities of $\PM_{10}$ for the different provinces  in Lombardy. The blue line corresponds to the average of the posterior random density, while the blue filled area denotes the $95\%$ credible band. The red line corresponds to the critical threshold for the $\PM_{10}$, i.e. $50 \mu g/ m^3$. In each panel is also reported $\bar Q_{j,50}$, the posterior mean of the risk associate to exceeding the critical value, along with the $95\%$ credible interval. }\label{fig:all_dens}
\end{figure}

\section*{Acknowledgements}
The first and third author are supported by the European Union -- Next Generation EU funds, component M4C2, investment 1.1., PRIN-PNRR 2022 (P2022H5WZ9). The authors gratefully acknowledge the financial support from the Italian Ministry of University and Research (MUR), ``Dipartimenti di Eccellenza'' grant 2023-2027, and the DEMS Data Science Lab for supporting this work through computational resources. The authors are grateful to Andrea Gilardi for his valuable insights on the geographical illustration in Section~\ref{sec:pm10}.

\bibliographystyle{apa}
\bibliography{bibliography}


\appendix
{\section*{\begin{LARGE}Appendices\end{LARGE}}}
	The appendix is structured as follows. Appendix~\ref{prof_prop_eppf} shows a proof of Proposition~\ref{prop:eppf}. Appendix~\ref{proof:th1} presents a proof of Theorem~\ref{thm:posterior_CORMS}. In Appendix~\ref{proof:KLdiv} we present a derivation of the expected Kullback-Leibler divergence of Equation~\ref{eq:KL_div}. Appendix \ref{proof:thm_nested} shows a direct proof of Theorem \ref{thm:pEPPFnested}. Appendix~\ref{sec:post_gamma_beta} presents the posterior characterization with gamma scores and beta directing L\'evy intensity. Appendix \ref{app:algo} describes in detail the sampling strategy of Algorithm \ref{algo:compound}. Appendix~\ref{sec:sim_comp} shows a simulation study regarding the CoRMs mixture models and Algorithm~\ref{algo:compound}. Finally, Appendix~\ref{app:sky} presents an application of the CoRMs mixture model to an astronomical dataset.

\section{Proof of Proposition \ref{prop:eppf}}\label{prof_prop_eppf}


We start from the definition of pEPPF in \eqref{eq:peppf}. An application of the Fubini-Tonelli theorem yields
\begin{equation} \label{eq:fubini}
	\E\left[\int_{\X^k}\prod_{j=1}^d \prod_{\ell=1}^k \tilde{p}_j^{n_{\ell,j}} (\D x_\ell^*) \right] =  \int_{\X^k}  \E \left[ \prod_{j=1}^d \prod_{\ell=1}^k \tilde{p}_j^{n_{\ell,j}} (\D x_\ell^*)\right].
\end{equation}
We first evaluate the integrand in \eqref{eq:fubini}.  To this end we consider $k$ disjoint balls $A_\epsilon (x_\ell^*)$, centered at $x_\ell^*$ with radius $\epsilon >0$, as $\ell=1, \ldots , k$, and we evaluate the expected value
\begin{equation*}
	\label{eq:LB}
	M_{\bm{n}_1, \ldots  , \bm{n_d}} (A_\epsilon (x_1^*) \times \ldots \times A_\epsilon (x_k^*)) := \E \left[ \prod_{j=1}^d \prod_{\ell=1}^k \tilde{p}_j^{n_{\ell,j}} (A_\epsilon (x_\ell^*))\right] .
\end{equation*}
Similarly to \citet{Jam09}, we have:
\begin{equation} \label{eq:int_gamma}
	\begin{split}
		M_{\bm{n}_1, \ldots  , \bm{n_d}} &(A_\epsilon (x_1^*) \times \ldots \times A_\epsilon (x_k^*))  =   \E \left[ \prod_{j=1}^d  \frac{1}{\tilde{\mu}_j^{n_j} (\X)}  \prod_{\ell=1}^k 
		\tilde{\mu}_j^{n_{\ell,j}} (A_\epsilon (x_\ell^*)) \right]  \\
		& = \int_{\R_+^d} \prod_{j=1}^d \frac{u_j^{n_j-1}}{\Gamma (n_j)}   \E \left[ e^{-\sum_{j=1}^d u_j \tilde{\mu}_j (\X)}
		\prod_{j=1}^d \prod_{\ell=1}^k \tilde{\mu}_j^{n_{\ell,j}} (A_\epsilon (x_\ell^*))  \right]  \D u_1 \cdots \D u_d.
	\end{split}
\end{equation}
From now on, we concentrate on the expected value in \eqref{eq:int_gamma}. 
	Then, denoting by $\X^*:= \X \setminus A_\epsilon (x_1^*) \cup \ldots \cup A_\epsilon (x_k^*)$, we get
	\begin{align*}
		\E &\left[ e^{-\sum_{j=1}^d u_j \tilde{\mu}_j (\X)}
		\prod_{j=1}^d \prod_{\ell=1}^k \tilde{\mu}_j^{n_{\ell,j}} (A_\epsilon (x_\ell^*))  \right] \\
		&\qquad = \E 
		\left[ e^{-\sum_{j=1}^d u_j \tilde{\mu}_j (\X^*)}
		\prod_{\ell=1}^k e^{-\sum_{j=1}^d u_j \tilde \mu_j (A_\epsilon (x_\ell^*))}  \prod_{j=1}^d \tilde{\mu}_j^{n_{\ell,j}} (A_\epsilon (x_\ell^*))  \right]\\
		&\qquad =  \E   \left[ e^{-\sum_{j=1}^d u_j \tilde{\mu}_j (\X^*)} \right]
		\prod_{\ell=1}^k \E \left[ e^{-\sum_{j=1}^d u_j \tilde \mu_j (A_\epsilon (x_\ell^*))}  \prod_{j=1}^d \tilde{\mu}_j^{n_{\ell,j}} (A_\epsilon (x_\ell^*))  \right]
	\end{align*}
	where the last equality follows from the independence  property of the completely random vector $\bm{\tilde \mu} :=(\tilde \mu_1, \ldots , \tilde \mu_d)$, i.e., $\bm{\tilde \mu} (A)$ and $\bm{\tilde \mu} (B)$ are independent for disjoint Borel sets $A$ and $B$. It is also easy to see that
	\begin{align*}
		& \E \left[ e^{-\sum_{j=1}^d u_j \tilde{\mu}_j (\X)}
		\prod_{j=1}^d \prod_{\ell=1}^k \tilde{\mu}_j^{n_{\ell,j}} (A_\epsilon (x_\ell^*))  \right] \\
		& \qquad =  \E   \left[ e^{-\sum_{j=1}^d u_j \tilde{\mu}_j (\X^*)} \right]
		\prod_{\ell=1}^k (-1)^{n_{\ell,1}+ \cdots + n_{\ell, d}} \E \left[ \frac{\partial^{n_{\ell, 1}+ \cdots + n_{\ell, d}}}{\partial u_1^{n_{\ell, 1}} \cdots u_d^{n_{\ell, d}}}
		e^{-\sum_{j=1}^d u_j \tilde{\mu}_j (A_\epsilon (x_\ell^*))} \right] \\
		& \qquad =  \E   \left[ e^{-\sum_{j=1}^d u_j \tilde{\mu}_j (\X^*)} \right]
		\prod_{\ell=1}^k (-1)^{n_{\ell,1}+ \cdots + n_{\ell, d}} \frac{\partial^{n_{\ell, 1}+ \cdots + n_{\ell, d}}}{\partial u_1^{n_{\ell, 1}} \cdots u_d^{n_{\ell, d}}}\E \left[ 
		e^{-\sum_{j=1}^d u_j \tilde{\mu}_j (A_\epsilon (x_\ell^*))} \right] 
	\end{align*}
	where we have exchanged derivatives and the expected value. The expectations may be easily evaluated by resorting to the explicit expression of the Laplace functional of CoRMs (see, e.g., Proposition \ref{prop:prior_laplace} with $f(m|s)=f(m)$ independent of $s$). Thus, we obtain
	\begin{align*}
		& \E \left[ e^{-\sum_{j=1}^d u_j \tilde{\mu}_j (\X)}
		\prod_{j=1}^d \prod_{\ell=1}^k \tilde{\mu}_j^{n_{\ell,j}} (A_\epsilon (x_\ell^*))  \right]  =  e^{  - \int_0^\infty  \left[ 1-\prod_{j=1}^d \int_0^\infty  e^{-u_j m s} f(m)\D m \right]  \nu^*(\D s) \alpha (\X^*) }\\
		& \qquad \times \prod_{\ell=1}^k (-1)^{n_{\ell,1}+ \cdots + n_{\ell, d}} \frac{\partial^{n_{\ell, 1}+ \cdots + n_{\ell, d}}}{\partial u_1^{n_{\ell, 1}} \cdots u_d^{n_{\ell, d}}}
		e^{  - \int_0^\infty  \left[ 1-\prod_{j=1}^d \int_0^\infty  e^{-u_j m s} f(m)\D m \right]  \nu^*(\D s) \alpha (A_\epsilon (x_\ell^*))}.
	\end{align*}
	Taking the derivatives of the previous expression it is not difficult to show that
	\begin{equation*} \label{eq:eppd_cond_u}
		\begin{split}
			& \E \left[ e^{-\sum_{j=1}^d u_j \tilde{\mu}_j (\X)}
			\prod_{j=1}^d \prod_{\ell=1}^k \tilde{\mu}_j^{n_{\ell,j}} (A_\epsilon (x_\ell^*))  \right]  \\
			& \qquad =  \exp \left\{  - \int_0^\infty  \left[ 1-\prod_{j=1}^d \int_0^\infty  e^{-u_j m s} f(m)\D m \right]  \nu^*(\D s) \right\}\\
			& \qquad \times \prod_{\ell=1}^k 
			\int_0^\infty  \prod_{j=1}^d \int_0^\infty (ms)^{n_{\ell, j}}  e^{-u_j m s} f(m)\D m   \nu^*(\D s) \alpha (A_\epsilon (x_\ell^*)) + 
			o \left( \prod_{\ell=1}^k \alpha (A_\epsilon (x_\ell^*)) \right)
		\end{split}
	\end{equation*}
	where we have used the fact that $\alpha$ is not atomic and we have neglected the higher order terms as $\epsilon \to 0$.
	We can now substitute the previous expression in \eqref{eq:int_gamma} to get
	\[
	\begin{split}
		& M_{\bm{n}_1, \ldots  , \bm{n_d}} (A_\epsilon (x_1^*) \times \ldots \times A_\epsilon (x_k^*)) \\
		& \quad =  \int_{\R_+^d} \prod_{j=1}^d \frac{u_j^{n_j-1}}{\Gamma (n_j)}   \exp \left\{  -\int_{\X} \int_0^\infty  \left[ 1-\prod_{j=1}^d \int_0^\infty  e^{-u_j m s} f(m)\D m \right]  \nu^*(\D s) \alpha (\D x) \right\}\\
		& \quad \times  \prod_{\ell=1}^k  \int_0^\infty  \prod_{j=1}^d  \int_0^\infty  e^{-u_j m s} (m s)^{n_{\ell,j}}  f(m)\D m  \nu^*(\D s) \alpha ( A_\epsilon (x_\ell^*))  \D u_1 \cdots \D u_d  + o \left( \prod_{\ell=1}^k \alpha (A_\epsilon (x_\ell^*)) \right) 
	\end{split}
	\]
	and the thesis now follows by noticing that
	\[
	\Pi_k^{(n)}(\bm{n}_1 , \cdots , \bm{n}_d) = \int_{\X^k} M_{\bm{n}_1, \ldots  , \bm{n_d}} (\D x_1 \times \cdots \times \D x_k) .
	\]

	\section{Proof of Theorem \ref{thm:posterior_CORMS}} \label{proof:th1}
	
	In order to prove Theorem \ref{thm:posterior_CORMS} we need to introduce some preliminary results. 
	First, we need to derive  the Laplace functional of a slightly different vector  $(\tilde{\mu}_1, \ldots, \tilde{\mu}_d)$  than the one considered in \eqref{eq:mu_j_def}. 
	Indeed we assume that 
	\begin{equation}\label{eq:mu_j_def_app}
		\begin{split}
			\tilde{\mu}_j \mid \tilde{\eta} &=  \sum_{i\geq 1} m_{j,i} J_i \delta_{\tilde x_i}, \quad j =1, \ldots , d\\
			\tilde{\eta} &= \sum_{i \geq 1} J_i \delta_{\tilde x_i}
		\end{split}
	\end{equation}
	where $m_{j,i}| J_i \simiid f (\, \cdot \, \vert J_i)$, i.e., the distribution of $m_{j,i}$ depends also on the jump $J_i$   of the driven CRM measure $\tilde\eta$. Such a specification generalizes the case that we considered along the manuscript. 
	\begin{proposition}\label{prop:prior_laplace}
		Let $(\tilde{\mu}_1, \ldots, \tilde{\mu}_d)$ be a vector of random measures defined as in \eqref{eq:mu_j_def_app}, where $m_{j,i}\mid J_i \simind f(\, \cdot\, \vert J_i)$. Then, the Laplace functional of $(\tilde{\mu}_1, \ldots, \tilde{\mu}_d)$ is given by
		\begin{equation*}
			\label{eq:prior_laplace}
			\Lap_{\tilde{\bm \mu}} (g_1, \ldots, g_d) =
			\exp  \left\{ -\int_\X\int_0^{+\infty} \left[ 1-  \prod_{j=1}^d  \int_0^{+\infty} \e^{-ms g_j (x)}  f(m\vert s) \D m \right]  \nu^*(\D s) \alpha (\D x) \right\}
		\end{equation*}
		for all measurable functions $g_j: \X \to \R_+$, with $j=1, \ldots , d$.
	\end{proposition}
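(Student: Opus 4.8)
The plan is to exploit the Poisson-process representation of the driving CRM $\tilde\eta$ together with the conditional independence of the scores, reducing the computation to a single application of Campbell's exponential formula (the functional form behind the L\'evy--Khintchine identity \eqref{eq:LK}). First I would substitute the series representation \eqref{eq:mu_j_def_app} into the definition of the Laplace functional. Writing $\int_\X g_j(x)\,\tilde\mu_j(\D x)=\sum_{i\ge 1} m_{j,i}J_i\,g_j(\tilde x_i)$ and interchanging the sums over $i$ and $j$, the exponent becomes $\sum_{i\ge 1}J_i\sum_{j=1}^d m_{j,i}g_j(\tilde x_i)$, so that
\[
\exp\Bigl\{-\sum_{j=1}^d \int_\X g_j(x)\,\tilde\mu_j(\D x)\Bigr\}=\prod_{i\ge 1}\exp\Bigl\{-J_i\sum_{j=1}^d m_{j,i}g_j(\tilde x_i)\Bigr\}.
\]

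Second, I would condition on the CRM $\tilde\eta$, i.e.\ on the jump--location pairs $\{(J_i,\tilde x_i)\}_{i\ge 1}$. Given $\tilde\eta$, the score vectors indexed by distinct $i$ are independent, and within a fixed $i$ the components $m_{j,i}$ are i.i.d.\ with law $f(\,\cdot\mid J_i)$; hence the conditional expectation factorizes over both $i$ and $j$. Integrating out each score gives $\E[\,\cdot\mid\tilde\eta]=\prod_{i\ge 1}G(J_i,\tilde x_i)$, where
\[
G(s,x):=\prod_{j=1}^d \int_0^{+\infty} \e^{-m s g_j(x)}\, f(m\mid s)\,\D m .
\]

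Third, since $\{(J_i,\tilde x_i)\}_{i\ge 1}$ is a Poisson point process on $\R_+\times\X$ with mean measure $\nu^*(\D s)\,\alpha(\D x)$, I would take the outer expectation by Campbell's exponential formula applied to $h=\log G$ (note $0<G\le 1$, so $h\le 0$ is well defined):
\[
\E\Bigl[\prod_{i\ge 1} G(J_i,\tilde x_i)\Bigr]=\exp\Bigl\{\int_{\R_+\times\X}\bigl(G(s,x)-1\bigr)\,\nu^*(\D s)\,\alpha(\D x)\Bigr\}.
\]
Recognizing that $1-G(s,x)=1-\prod_{j=1}^d\int_0^{+\infty}\e^{-msg_j(x)}f(m\mid s)\,\D m$ yields exactly the claimed expression. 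An equivalent route is to view the scores as independent marks and invoke the marking theorem, which turns $\{(J_i,\tilde x_i,(m_{j,i})_j)\}$ into a single Poisson process on $\R_+\times\X\times\R_+^d$ with intensity $\nu^*(\D s)\,\alpha(\D x)\prod_{j}f(m_j\mid s)\,\D m_j$, to which Campbell's formula applies directly.

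The main obstacle is the rigorous justification of the two interchanges --- pulling the expectation inside the infinite product after conditioning on $\tilde\eta$, and applying Campbell's formula --- which hinges on the finiteness of $\int_{\R_+\times\X}(1-G(s,x))\,\nu^*(\D s)\,\alpha(\D x)$. Here condition (A1) is precisely what is needed: using $0\le 1-\e^{-t}\le\min\{1,t\}$ to bound each factor $1-\int_0^{+\infty}\e^{-msg_j(x)}f(m\mid s)\,\D m$, together with the elementary inequality $1-\prod_j a_j\le\sum_j(1-a_j)$ for $a_j\in[0,1]$ and the finiteness of $\alpha$, reduces the integrability of $1-G$ to the regularity integral appearing in (A1). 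With this in hand, Campbell's formula and the accompanying Fubini and dominated-convergence steps are legitimate, and the identity follows.
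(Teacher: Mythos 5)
Your proposal is correct and follows essentially the same route as the paper's own proof: condition on $\tilde\eta$, factorize the conditional expectation over jumps and groups, integrate out the scores, and then apply the exponential formula for the Poisson process underlying the CRM (the paper phrases this as exploiting the Lévy--Khintchine representation of $\tilde\eta$ applied to $\sum_i\log G(J_i,\tilde x_i)$). Your additional remarks on justifying the interchanges via (A1) make explicit a step the paper leaves implicit, but the argument is the same.
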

	\begin{proof}
		By using the notation $\tilde{\mu}_j (g)= \int_\X g_j(x) \tilde{\mu}_j (\D x)$, the Laplace functional may be evaluated as follows:
		\begin{align*}
			\Lap_{\tilde{\bm \mu}} (g_1, \ldots, g_d) & = \E \left[\exp \left\{ -\tilde{\mu}_1 (g_1) - \cdots \tilde{\mu}_d (g_d) \right\}\right]
			=  \E \left[ \E \left[ \exp \left\{ -\tilde{\mu}_1 (g_1) - \cdots \tilde{\mu}_d (g_d) \right\} \Big|  \tilde{\eta} \right]\right]\\
			& =  \E \left[ \prod_{j=1}^d \prod_{i \geq 1}  \E \left[ e^{-m_{j,i} J_i  g_j (\tilde{x}_i)} \Big| \tilde{\eta }\right]\right] =
			\E \left[ \prod_{i \geq 1} \prod_{j=1}^d  \int_0^{\infty} e^{-m J_i g_j (\tilde{x}_i)} f (m\vert J_i) \D m \right]
		\end{align*}
		where we used the representation \eqref{eq:mu_j_def_app} of the random measure $\tilde{\mu}_j$, as $j=1,\ldots , d$, and the fact that $m_{j,i}\mid J_i$ are independent with distribution $f(\,\cdot\,\vert J_i)$, for any $i \geq 1$. We now exploit the fact that $\tilde{\eta} $ is a CRM with known L\'evy intensity, to obtain 
		\begin{equation*}
			\begin{split}
				\Lap_{\tilde{\bm \mu}} (g_1, \ldots, g_d)  &=
				\E \left[ \exp \left\{ \sum_{i \geq 1} \log \Big( \prod_{j=1}^d \int_{\R_+} e^{-m J_i g_j (\tilde{x}_i) }  f(m\vert J_i) \D m \Big)  \right\}\right]\\
				& = \exp \left\{ - \int_\X \int_{\R_+} \left[ 1-\prod_{j=1}^d  \int_{\R_+}  e^{-ms g_j (x)} f(m\vert s) \D m \right]  \nu^* (\D s)  \alpha (\D x)\right\}
			\end{split}
		\end{equation*}
		and the thesis follows.
	\end{proof}
	
	We also introduce the next lemma to characterize the Laplace functional of a vector of dependent random measures having random heights at $k$ fixed locations $\{x_1^*, \ldots , x_k^*\}$. The following Lemma describes an explicit expression of such a quantity, which turns out to be useful in the proof of the posterior characterization.
	\begin{lemma} \label{lem:discrete_comp}
		Let $\{x_1^*, \ldots , x_k^*\}$ be a set of $k$ deterministic values in $\X$, and consider the following random measures
		\begin{equation}\label{eq:discrete_comp}
			\tilde{\lambda}_j \mid \tilde{\lambda}_0 =  \sum_{\ell=1}^k \sigma_\ell  T_{\ell, j} \delta_{x_\ell^*}, \quad \text{as } j=1 ,\ldots , d, \quad \text{and} \quad
			\tilde{\lambda}_0 =  \sum_{\ell=1}^k \sigma_\ell \delta_{x_\ell^*}
		\end{equation}
		where $T_{\ell, j}$ and $\sigma_\ell$ are positive random variables with density on $\R_+$ given by:
		\[
		T_{\ell, j}  \mid \sigma_\ell \simind  \varphi(t_{\ell, j} \mid \sigma_\ell)  , \quad \sigma_\ell \sim \xi_\ell(\, \cdot 1\,), \quad as \;\ell=1, \ldots , k,\; and \;j=1, \ldots , d.
		\]
		Then, the Laplace functional of the vector of the random measures $(\tilde{\lambda}_1, \ldots, \tilde{\lambda}_d)$ in \eqref{eq:discrete_comp} is given by:
		\begin{equation}\label{eq:laplace_fixed}
			\Lap_{\tilde{\bm \lambda}} (g_1, \ldots, g_d) =  
			\prod_{\ell=1}^k \int_0^\infty  \prod_{j=1}^d  \int_0^\infty  e^{-\sigma t_{\ell,j} g (x_\ell^*)} \varphi( t_{\ell, j} \mid \sigma_\ell ) \D t_{\ell, j}\xi_\ell(\D \sigma_\ell) .
		\end{equation}
	\end{lemma}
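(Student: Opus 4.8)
The plan is to evaluate the Laplace functional directly from its definition, exploiting the fact that the measures in \eqref{eq:discrete_comp} are supported on the deterministic, finite collection of atoms $\{x_1^*,\ldots,x_k^*\}$. First I would write
\[
\Lap_{\tilde{\bm \lambda}} (g_1, \ldots, g_d) = \E\left[\exp\left\{ -\sum_{j=1}^d \tilde{\lambda}_j(g_j)\right\}\right],
\]
with $\tilde{\lambda}_j(g_j) := \int_\X g_j(x)\, \tilde{\lambda}_j(\D x)$. Since each $\tilde{\lambda}_j$ puts random height $\sigma_\ell T_{\ell,j}$ on the fixed atom $x_\ell^*$, this integral collapses to the finite sum $\sum_{\ell=1}^k \sigma_\ell T_{\ell,j} g_j(x_\ell^*)$, so that after summing over $j$ the exponential factorizes completely over the pairs $(\ell,j)$:
\[
\exp\left\{ -\sum_{j=1}^d \tilde{\lambda}_j(g_j)\right\} = \prod_{\ell=1}^k \prod_{j=1}^d e^{-\sigma_\ell T_{\ell,j} g_j(x_\ell^*)}.
\]

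Next I would take the expectation in two stages, by the tower property, in the same spirit as the proof of Proposition \ref{prop:prior_laplace}. Conditioning first on $(\sigma_1,\ldots,\sigma_k)$, the heights $T_{\ell,j}$ are (conditionally) independent with density $\varphi(\,\cdot \mid \sigma_\ell)$, so the conditional expectation factorizes and each factor becomes $\int_0^\infty e^{-\sigma_\ell t\, g_j(x_\ell^*)} \varphi(t \mid \sigma_\ell)\,\D t$. Integrating out the $\sigma_\ell$, which are independent across $\ell$ with marginal law $\xi_\ell$, then produces exactly the product-integral expression \eqref{eq:laplace_fixed} claimed in the statement.

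The argument is essentially routine, and the only points demanding care are the conditional-independence bookkeeping that licenses the two successive factorizations: namely that, given the $\sigma_\ell$, the heights $T_{\ell,j}$ are mutually independent across both $\ell$ and $j$, and that the atom sizes $\sigma_\ell$ are independent across $\ell$. Unlike the CoRM computations, here no limiting manipulation (vanishing balls, diffuseness of $\alpha$, or Poisson-process exponentiation) is required, because the locations $x_1^*,\ldots,x_k^*$ are fixed and finite in number; hence the main effort is simply to keep the indexing of the two product layers straight.
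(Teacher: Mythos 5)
Your proposal is correct and follows essentially the same route as the paper's own proof: plug the finite atomic representation into the Laplace functional, condition on the $\sigma_\ell$ (equivalently on $\tilde\lambda_0$) to factorize over $(\ell,j)$ using the conditional independence of the $T_{\ell,j}$, integrate each height against $\varphi(\cdot\mid\sigma_\ell)$, and then integrate out the independent $\sigma_\ell$. No substantive differences to report.
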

	
	\begin{proof}
		A straightforward calculation shows that:
		\begin{align*}
			& \Lap_{\tilde{\bm \lambda}} (g_1, \ldots, g_d)  = 
			\E\left[  \exp \left\{ -\sum_{j=1}^d  \tilde{\lambda}_j (g_j) \right\} \right] = \E \left[\exp \left\{ - \sum_{j=1}^d  \sum_{\ell=1}^k  \sigma_\ell T_{\ell, j} g_j (x_\ell^*) \right\}\right]\\
			& \qquad = \E\left[ \E  \left[ \exp \left\{ - \sum_{j=1}^d  \sum_{\ell=1}^k  \sigma_\ell T_{\ell, j} g_j (x_\ell^*) \right\} \Big| \tilde{\lambda}_0 \right]  \right] = \E  \left[ \E \left[ \prod_{j=1}^d \prod_{\ell=1}^k  \E \left[  e^{-\sigma_\ell T_{\ell, j} g_j (x_\ell^*)}  \Big|  \tilde{\lambda}_0 \right] \right]\right] \\
			& \qquad = \prod_{\ell=1}^k  \E \left[\prod_{j=1}^d  \int_0^\infty  e^{-\sigma_\ell t g_j (x_\ell^*)}  \varphi( t_{\ell, j}\vert \sigma_\ell) \D t_{\ell, j}\right] \\
   &\qquad = \prod_{\ell=1}^k  \int_0^\infty \prod_{j=1}^d  \int_0^\infty  e^{-\sigma_\ell t g_j (x_\ell^*)}  \varphi(t_{\ell, j}\vert \sigma_\ell)  \D t_{\ell, j}\xi_\ell(\D \sigma_\ell),
		\end{align*}
		which corresponds to the right term in \eqref{eq:laplace_fixed}.
	\end{proof}

	We are now in a position to prove Theorem \ref{thm:posterior_CORMS}.
	\begin{proof}[Proof of Theorem \ref{thm:posterior_CORMS}]
		The proof requires to evaluate the posterior Laplace functional of $(\tilde \mu_1,\ldots , \tilde\mu_d)$, which equals
		\begin{equation} \label{eq:laplace_psterior_proof}
			\begin{split}
				& \E \left( \exp \{-\tilde{\mu}_1 (g_1)- \cdots - \tilde{\mu}_d (g_d) \} \Big| 
				\bm{X}_1, \ldots ,\bm{X}_d \right)\\
				&\qquad\qquad = \lim_{\epsilon \to 0} \frac{ \E \left[  \exp \{ -\tilde{\mu}_1 (g_1)- \cdots - \tilde{\mu}_d (g_d) \}   \prod_{j=1}^d \prod_{\ell=1}^k  
					\tilde{p}_j^{n_{\ell,j}} (A_\epsilon (x_\ell^*)) \right]}{\E \left[    \prod_{j=1}^d \prod_{\ell=1}^k  
					\tilde{p}_j^{n_{\ell,j}} ( A_\epsilon (x_\ell^*)) \right]} ,
			\end{split}
		\end{equation}
		for all measurable functions $g_j : \X \to \R_+$, as $j=1, \ldots , d$, and 
		$A_\epsilon (x_\ell^*)$ are again disjoint balls centered at $x_\ell^*$ with radius $\epsilon$, as $\ell=1, \ldots , k$.\\
		The denominator in \eqref{eq:laplace_psterior_proof} has been evaluated in the proof of Proposition \ref{prop:eppf} (Section \ref{prof_prop_eppf}) and it equals
		\begin{equation} \label{eq:den}
			\begin{split}
				&\int_{\R_+^d }\prod_{j=1}^d  \frac{u^{n_j-1}_j}{\Gamma (n_j)}\exp \left\{  -\alpha (\X) \int_0^\infty  \left[ 1-\prod_{j=1}^d \int_0^\infty  e^{-u_j m s} f( m) \D m \right]  \nu^*(\D s)  \right\}\\
				& \qquad \times  \prod_{\ell=1}^k  \int_0^\infty  \prod_{j=1}^d  \int_0^\infty  e^{-u_j m s} (ms)^{n_{\ell,j}}  f(m) \D m \nu^*(\D s) 
				\alpha (A_\epsilon (x_\ell^*)) \D u_1 \ldots \D u_d \\
				&\qquad+ o \left( \prod_{\ell=1}^k \alpha (A_\epsilon (x_\ell^*)) \right).
			\end{split}
		\end{equation}
		The numerator in \eqref{eq:laplace_psterior_proof} follows  along similar lines as in the proof of Proposition \ref{prop:eppf}, and it amounts to be:
		\begin{equation}
			\label{eq:num}
			\begin{split}
				&   \int_{\R_+^d }\prod_{j=1}^d  \frac{u^{n_j-1}_j}{\Gamma (n_j)}\exp\left\{ -\alpha (\X)  \int_0^\infty  \left[ 1-\prod_{j=1}^d  \int_0^\infty  e^{-(u_j+g_j (x)) m s} f( m)  \D m\right] \nu^*(\D s) \right\}\\
				& \quad \times \prod_{\ell=1}^k \int_0^\infty  \prod_{j=1}^d  \int_0^\infty e^{-(u_j +g_j (x_\ell^*))ms } (ms)^{n_{\ell, j}} f( m) \D m \nu^* (\D s )
				\alpha (A_\epsilon (x_\ell^*)) \D u_1 \ldots \D u_d\\
				&\qquad + o\left( \prod_{j=1}^k \alpha (A_\epsilon (x_\ell^*)) \right) .
			\end{split}
		\end{equation}
		If we now divide \eqref{eq:num} by \eqref{eq:den} and letting $\epsilon \to 0$, we can easily recover the posterior Laplace functional. By also conditioning on the latent variables $U_1, \ldots , U_d$, the posterior Laplace functional boils down to
		\begin{equation}  \label{eq:lap_post}
			\begin{split}
				& \exp \left\{ -\alpha (\X)  \int_0^\infty  \left[ 1- \prod_{j=1}^d  \int_0^\infty\frac{e^{-(U_j+ g_j (x)) ms} f(m) \D m}{\int_0^\infty
					e^{-U_j m s} f( m) \D m} \right] \right. \\
				& \qquad \left. \times \prod_{j=1}^d  \int_0^\infty  e^{-U_j  m s} f(m) \D m \nu^* (\D s) \right\}\\
				& \qquad  \times  \prod_{\ell=1}^k  \int_0^\infty  \frac{ \prod_{j=1}^d  \int_0^\infty e^{-(U_j +g_j (x_\ell^*))m s } (m s)^{n_{\ell, j}} f( m) \D m\nu^* (\D s )}{ \int_0^\infty  \prod_{j=1}^d  \int_0^\infty  e^{-U_j m s} (m s)^{n_{\ell,j}}  f( m) \D m\nu^*(\D s) }  .
			\end{split}
		\end{equation}
		Note that the exponential part in \eqref{eq:lap_post} is the Laplace functional of the vector $(\tilde{\mu}_1' , \ldots, \tilde{\mu}_d')$ in point i) of the theorem. Thanks to Lemma \ref{lem:discrete_comp}, it is not difficult to observe that the last product in \eqref{eq:lap_post} is the Laplace functional of the weights $T_{\ell, j}$ in point ii). We finally note that the random measures and the jumps at fixed points of discontinuities are independent random elements, and the proof is complete.
	\end{proof}
	
	\section{Proof of Equation \eqref{eq:KL_div} and its limit behaviour}\label{proof:KLdiv}
	
	Let $\tilde p_j = \tilde \mu_j / \tilde \mu_j(\X)$ be the generic $j$th dimension of a CRV and $\tilde p = \tilde \eta / \tilde \eta(\X)$ the baseline distribution. We want to characterize the expected Kullblack-Leibler divergence of $\tilde p_j$ from $\tilde p$, under the assumption of gamma-distributed scores and stable directing measure. We first recall that the previous random distributions can be written as 
	\[
	\begin{split}
		\tilde p &= \frac{1}{B} \sum_{i \geq 1} J_i \delta_{\tilde x_i}, \qquad \text{where } B=\sum_{i\geq 1} J_i,\\
		\tilde p_j &= \frac{1}{B_j} \sum_{i \geq 1} m_{j,i} J_i \delta_{\tilde x_i}, \qquad \text{where } B_j=\sum_{i\geq 1} m_{j,i} J_i .
	\end{split}
	\]
	Then, the Kullback-Leibler divergence of $\tilde p_j$ from $\tilde p$ equals
	\[
	\begin{split}
		\E \left[ \mathrm{KL} \left( \tilde p, \tilde p_j \right) \right] &= \E \left[ \sum_{i\geq 1} \frac{J_i}{B} \log \left( \frac{J_i / B}{m_{j,i} J_i / B_j} \right) \right] \\
		&= -\E \left[ \sum_{i\geq 1} \frac{J_i}{B} \log (m_{j,i}) \right] + \E \left[\sum_{i\geq 1} \frac{J_i}{B}\log \left( \frac{B_j}{B} \right) \right] \\
		&= - \sum_{i\geq 1} \E [\log(m_{j,i})] \E\left[ \frac{J_i}{B} \right] + \E \left[ \log\left(\frac{B}{B_j}\right) \sum_{i\geq 1}\frac{J_i}{B}\right] \\
		&= -\E[\log (m_{j,1})] + \E [\log (B)] - \E [\log (B_j)],
	\end{split}
	\]
 where we used the fact that all the $m_{j,i}$s are i.i.d. as $m_{j,1}$, for $j=1, \ldots , d$.
	We remark that the total mass $B$ and $B_j$ of the baseline distribution $\tilde p$ and the group-specific random probability measure $\tilde p_j$ are driven by the L\'evy intensities
	\[
	\begin{split}
		\nu^*(\D z) = \frac{\sigma \Gamma(\phi)}{\Gamma(\sigma + \phi) \Gamma(1 - \sigma)} z^{-1-\sigma}\D z, \qquad \nu_j(\D s) = \frac{\sigma}{\Gamma(1 - \sigma)} s^{-1-\sigma} \D s,
	\end{split}
	\]
	respectively. Hence, the following equality holds 
	\[
	B \stackrel{d}{=} B_j \left( \frac{\Gamma(\phi)}{\Gamma(\sigma + \phi)} \right)^{1/\sigma}
	\]
	and the Kullback-Leibler divergence becomes 
	\begin{equation}\label{eq:exp_KL_dig}
		\begin{split}
			\E \left[ \mathrm{KL} \left( \tilde p, \tilde p_j \right) \right] &=  - \E[\log(m_{j,1})] + \frac{1}{\sigma} \log \left(\frac{\Gamma(\sigma + \phi)}{\Gamma(\phi)} \right)\\
			&= - \psi(\phi) + \frac{1}{\sigma} \log(\Gamma(\sigma+\phi)) - \frac{1}{\sigma}\log(\Gamma(\phi))
		\end{split}
	\end{equation}
	where $\psi$ is the digamma function, i.e., the first derivative of the natural logarithm of the gamma function. Thus, Equation \eqref{eq:KL_div} follows. 
	
	We denote by $T_\sigma(\phi)$ the latter expression in~\eqref{eq:exp_KL_dig}, and we want to study its behaviour for a fixed value of $\sigma$, by varying $\phi$ over its support. From~\eqref{eq:exp_KL_dig}, taking the first derivative of $T^\prime_\sigma(\phi)$ with respect to $\phi$ we have
	\[
	T^\prime_\sigma(\phi) = - \psi^\prime(\phi) + \frac{\psi(\sigma + \phi) - \psi(\phi)}{\sigma}. 
	\]
	Since the digamma function is strictly increasing and concave, we have $\psi(\phi + \sigma) < \psi(\phi) + \sigma \psi^\prime(\phi)$, for $\sigma \in (0,1)$. Therefore, $T^\prime_\sigma(\phi) < 0$ for any value of $\phi$ and $T_\sigma(\phi)$ is a decreasing function of $\phi$. We recall that $\phi \in (0, +\infty)$. Regarding the behaviour of $T_\sigma(\phi)$ when $\phi$ is approaching the boundaries of its support, for $\phi \to +\infty$ we have $\psi(\phi) \approx log (\phi) + \frac{1}{2\phi}$ and $\frac{\Gamma(\sigma + \phi)}{\Gamma(\phi)} \approx \phi^\sigma$, therefore $\lim_{\phi \to +\infty} T_\sigma(\phi) = 0$.
	 
	On the other hand, for $\phi \to 0$ we have $\Gamma(\phi) \approx \frac{1}{\phi}$, thanks to the recursive property of the gamma function, $\Gamma(1 + \phi) = \phi \Gamma(\phi)$. By looking at the first derivative of the logarithm of such recursion, we have $\psi(1 + \phi) = \frac{1}{\phi} + \psi(\phi)$, which implies that $\psi(\phi) \approx - \frac{1}{\phi}$ as $\phi \to 0$. Hence, 
	\[
	T_\sigma (\phi) \approx \frac{1}{\phi} + \frac{1}{\sigma} \log(\phi)
	\]
	as $\phi \to 0$, and 
	\[
	\lim_{\phi \to 0} T_\sigma(\phi) = +\infty. 
	\]
	We can conclude that $T_\sigma(\phi)$ is a strictly decreasing function of $\phi$ in $(0, \infty)$. 
	
	\section{Proof of Theorem~\ref{thm:pEPPFnested}}\label{proof:thm_nested}
	
	We want to evaluate the pEPPF $\Psi_k^{(n)}(\bm n_1, \bm n_2)$  associated with a sample from a two-dimensional vector of nCoRMs. We have that
	
	\begin{align*}
		\begin{split}
			&\Psi_k^{(n)}(\bm n_1, \bm n_2) = \E\left[ \prod_{j=1}^2 \prod_{\ell=1}^k \tilde p_j^{n_{\ell,j}}(\D x_\ell^*) \right] = \E\left[\E\left[  \prod_{j=1}^2 \prod_{\ell=1}^k \tilde p_j^{n_{\ell,j}}(\D x_\ell^*) \Big\vert \tilde q\right] \right] \\
			&\qquad=  \E\left[\sum_{r=1}^q \pi_r\prod_{j=1}^2 \prod_{\ell=1}^k \tilde p_j^{n_{\ell,j}}(\D x_\ell^*) \delta_{p_r}(\D \tilde p_j)\right] 
		\end{split}
	\end{align*}
	where $(p_1, \dots, p_q)$ are the normalized components of $\tilde q$, i.e., $p_r= \mu_r/\mu_r (\X )$. It is straightforward to see that 
	\begin{align*}
		\E&\left[\sum_{r=1}^q\pi_r \prod_{j=1}^2 \prod_{\ell=1}^k \tilde p_j^{n_{\ell,j}}(\D x_\ell^*) \delta_{p_r}(\D \tilde p_j) \right]  \\
		&= \E\left[\sum_{r=1}^q \pi_r^2  \prod_{\ell=1}^k  p_r^{n_{\ell,\bullet}}(\D x_\ell^*) + \sum_{r\neq s}\pi_r \pi_s \prod_{\ell=1}^k  p_r^{n_{\ell,1}}(\D x_\ell^*) p_s^{n_{\ell,2}}(\D x_\ell^*) \right] \\
		&=\tau_1 \E\left[\prod_{\ell=1}^k  p_r^{n_{\ell,\bullet}}(\D x_\ell^*)  \right]  + (1 - \tau_1) \E\left[\prod_{j=1}^2\prod_{\ell=1}^k  p_j^{n_{\ell,j}}(\D x_\ell^*)  \right] 
	\end{align*}
	where $\tau_1 =  \E \left[ \sum_{r = 1}^q \pi_r^2 \right]$, and the last equation holds in force of the factorization of the score distributions in identical marginal distributions. We note that the first term corresponds to the EPPF for the fully exchangeable case of a sample from a normalized CRM with L\'evy intensity $\nu_j(\D s) = \int z^{-1} f(\D s / z) \nu^*(\D z)$, while the second term corresponds to the pEPPF of a normalized vector of CoRMs. 
	
	\section{Beta directing L\'evy measure and gamma scores}\label{sec:post_gamma_beta}
	
	In the main manuscript we focused on CoRMs with gamma scores and a  stable directing L\'evy measure. Here we concentrate on another remarkable specification of CoRMs, obtained by considering gamma scores combined with a beta directing L\'evy measure, i.e.,
	\begin{equation}\label{eq:beta_gamma}
		\begin{split}
			f(x) & = \frac{1}{\Gamma (\phi)} \: x^{\phi-1} e^{-x}, \quad x >0,\\
			\nu^* (\D z) & = z^{-1} (1 - z)^{\phi - 1} \D z , \quad 0 < z  < 1,
		\end{split}
	\end{equation}
	where $\phi > 0$. As described by~\citet{Gri17}, with the specification in~\eqref{eq:beta_gamma}, the  marginal components of $(\tilde \mu_1, \dots, \tilde \mu_d)$ are gamma processes. As a consequence,  the normalization of  $(\tilde \mu_1, \dots, \tilde \mu_d)$ results in a vector of dependent Dirichlet processes. As done in Section~\ref{sec:example}, we can now characterize the posterior distribution of $(\tilde \mu_1, \dots, \tilde \mu_d)$ with score distribution and intensity as in \eqref{eq:beta_gamma}. 
	
	\begin{corollary} \label{cor:post_beta}
		Consider the model~\eqref{eq:model}, where $(\tilde p_1, \ldots , \tilde p_d)$ is obtained by normalizing a vector of CoRMs $(\tilde \mu_1, \ldots ,\tilde \mu_d ) $ with gamma scores and a beta directing L\'evy intensity. Suppose that  $\bm{X}_j:= (X_{1,j}, \ldots , X_{n_j,j})$, as $j=1, \ldots, d$, is a sample from the model~\eqref{eq:model}, and $X_1^*, \dots, X_k^*$ are  the distinct  values out of $(\bm X_1, \ldots, \bm X_d)$. Then, the posterior distribution of $(\tilde{\mu}_1, \ldots, \tilde{\mu}_d)$ satisfies Equation \eqref{eq:posterior}, where:
		\begin{itemize}
			\item[(i)]  $(\tilde{\mu}_1', \ldots, \tilde{\mu}_d')$ is a vector of dependent random measures represented as
			\[
			\tilde{\mu}_j' \mid \tilde{\eta}' = \sum_{i \geq 1} m_{j,i}'  J_i' \delta_{\tilde{x}_i '} , \quad \tilde{\eta}' = \sum_{i \geq 1} J_i' \delta_{\tilde{x}_i'}
			\]
			being $m_{j,i}' \mid J_i'$ independent with distribution $\textsc{Gamma} (\phi, U_j J_i' +1)$, and $\tilde{\eta}'$ is a CRM having L\'evy intensity 
			\[
			\nu ' (\D s) \alpha (\D x) = s^{-1}(1 - s)^{\phi - 1} \prod_{j=1}^d \frac{1}{(1 + U_j s)^{\phi}} \: \D s  \alpha (\D x );
			\]
			\item[(ii)]  conditionally on a random variable $\sigma_\ell$, 
			the vectors of jumps $(T_{\ell, 1}, \ldots, T_{\ell, d})$, for $\ell=1, \ldots, k$, are independent and the $j$th component is distributed as a $\textsc{Gamma}
			(\phi +n_{\ell, j}, \sigma_\ell U_j +1) $. Besides, the distribution of $\sigma_\ell$ is characterized by the following density on $\R_+$
			\begin{equation*}
				g_\ell (s)  \propto  s^{n_{\ell, \bullet}-1} (1 - s)^{\phi - 1} \prod_{j=1}^d \frac{ (\phi)_{n_{\ell,j}} }{(s U_j +1)^{n_{\ell, j}+\phi}}, \qquad \ell =1, \ldots, k.
			\end{equation*}
		\end{itemize}
	\end{corollary}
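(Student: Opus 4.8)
The plan is to obtain Corollary~\ref{cor:post_beta} as a direct specialization of Theorem~\ref{thm:posterior_CORMS} to the score density and directing L\'evy measure in~\eqref{eq:beta_gamma}. Since the theorem already delivers the posterior decomposition~\eqref{eq:posterior} for an arbitrary homogeneous CoRM, the whole task reduces to evaluating the four ingredients appearing in parts (i) and (ii) of the theorem---the updated score density $f'(m\mid J_i')$, the updated driving L\'evy intensity $\nu'$, the jump density $\varphi(t\mid s)$, and the mixing density $\xi_\ell$---under the gamma-beta choice. Before doing so I would check that conditions (A1)--(A3) are met by~\eqref{eq:beta_gamma}, so that the normalization is well defined and Theorem~\ref{thm:posterior_CORMS} is applicable; this is the case yielding dependent Dirichlet processes, so the verification is standard.

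For part (i), the updated score law is $f'(m\mid J_i')\propto e^{-U_j m J_i'}f(m)\propto m^{\phi-1}e^{-(U_j J_i'+1)m}$, which I immediately recognize as a $\textsc{Gamma}(\phi, U_j J_i'+1)$ density. The driving intensity requires the Laplace transform of the gamma score, namely
\[
\int_0^\infty e^{-U_j m s}f(m)\,\D m = \frac{1}{\Gamma(\phi)}\int_0^\infty m^{\phi-1}e^{-(U_j s+1)m}\,\D m = \frac{1}{(1+U_j s)^{\phi}} .
\]
Substituting this together with $\nu^*(\D s)=s^{-1}(1-s)^{\phi-1}\D s$ into the theorem's formula $\nu'(\D s)\alpha(\D x)=\prod_{j=1}^d\left[\int_{\R_+}e^{-U_j m s}f(m)\,\D m\right]\nu^*(\D s)\alpha(\D x)$ reproduces exactly the stated $\nu'$.

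For part (ii), the conditional jump density is $\varphi(t\mid s)\propto e^{-sU_j t}t^{n_{\ell,j}}f(t)\propto t^{n_{\ell,j}+\phi-1}e^{-(sU_j+1)t}$, i.e.\ a $\textsc{Gamma}(\phi+n_{\ell,j}, \sigma_\ell U_j+1)$ law. For the mixing density I would evaluate the moment integral
\[
\int_0^\infty e^{-m s U_j}m^{n_{\ell,j}}f(m)\,\D m = \frac{\Gamma(\phi+n_{\ell,j})}{\Gamma(\phi)}\,\frac{1}{(1+sU_j)^{n_{\ell,j}+\phi}} = \frac{(\phi)_{n_{\ell,j}}}{(1+sU_j)^{n_{\ell,j}+\phi}},
\]
and then assemble $\xi_\ell(s)\propto\prod_{j=1}^d\left[\frac{(\phi)_{n_{\ell,j}}}{(1+sU_j)^{n_{\ell,j}+\phi}}\,s^{n_{\ell,j}}\right]\nu^*(\D s)$; collecting the $s$-powers into $s^{n_{\ell,\bullet}-1}$ and retaining the beta factor $(1-s)^{\phi-1}$ from $\nu^*$ yields the claimed $g_\ell$.

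The computation is routine and no serious obstacle arises; the only points requiring attention are the bookkeeping of the Pochhammer constant $(\phi)_{n_{\ell,j}}=\Gamma(\phi+n_{\ell,j})/\Gamma(\phi)$, which is $s$-free and hence may be absorbed into the normalizing constant of $\xi_\ell$, and the observation that, since $\nu^*$ is supported on $(0,1)$, the density $g_\ell$ is in fact supported on $(0,1)$ rather than on all of $\R_+$ (the factor $(1-s)^{\phi-1}$ forcing $s<1$).
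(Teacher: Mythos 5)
Your proposal is correct and matches the paper's intended argument exactly: the corollary is obtained by specializing Theorem~\ref{thm:posterior_CORMS} to the gamma-beta choice, and all four Laplace-transform/moment computations you carry out are the right ones. Your side remark that $g_\ell$ is in fact supported on $(0,1)$ because $\nu^*$ lives there is also a fair observation about the statement's phrasing.
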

	
	Corollary~\ref{cor:post_beta} can be exploited to derive a posterior sampling scheme for CoRMs with gamma scores and a beta directing L\'evy measure, in the spirit of Algorithm~\ref{algo:compound}.

\section{Details on the implementation of Algorithm~\ref{algo:compound}}\label{app:algo}

In this section, we present a detailed discussion on the implementation of Algorithm~\ref{algo:compound}, when we assume a Gaussian kernel function, gamma distributed scores and stable L\'evy directing intensity function.  
We first observe that the joint distribution of the observations $Y_{i,j}$ and the latent elements $\theta_{i,j}$, conditionally on $U_1=u_1, \dots, U_d=u_d$, is  given by
\begin{equation*}
	\begin{split}
		\left(\prod_{j=1}^d \prod_{i=1}^{n_j} \kernel (y_{i,j}\mid \theta_{i,j})\right)
		\left(\prod_{\ell=1}^k \prod_{j=1}^d  (\phi)_{n_{\ell,j}} \right)\left( \frac{\sigma \Gamma (\phi)}{\Gamma (\sigma+\phi)
			\Gamma (1-\sigma)} \right)^k \qquad\qquad\\
		\qquad\qquad\times\left(\prod_{\ell=1}^k  \int_{\R_+} \sigma_\ell^{n_{\ell, \bullet} -\sigma-1} \prod_{j=1}^d \frac{1}{(u_j \sigma_\ell+1)^{n_{\ell, j}+\phi}} \D \sigma_\ell \right)
		\left(\prod_{\ell=1}^k \alpha (\D \theta_\ell^*)\right),
	\end{split}
\end{equation*}
that can also be written in terms of the distinct values of the latent variables as follows:
\begin{equation}\label{eq:joint_law}
	\begin{split}
		\left(\prod_{j=1}^d \prod_{\ell=1}^k\prod_{i \in C_{\ell, j}} \kernel (y_{i,j}\mid \theta_\ell^*)\right)
		\left(\prod_{\ell=1}^k \prod_{j=1}^d  (\phi)_{n_{\ell,j}}\right) \left( \frac{\sigma \Gamma (\phi)}{\Gamma (\sigma+\phi)
			\Gamma (1-\sigma)} \right)^k\qquad\qquad\\ 
		\qquad\qquad\times\left(\prod_{\ell=1}^k  \int_{\R_+} \sigma_\ell^{n_{\ell, \bullet} -\sigma-1}\prod_{j=1}^d \frac{1}{(u_j \sigma_\ell+1)^{n_{\ell, j}+\phi}} \D \sigma_\ell\right) \left(\prod_{\ell=1}^k \alpha(\D \theta_\ell^*)\right),
	\end{split}
\end{equation}
where $C_{\ell, j} = \{ i : \; \theta_{i,j} = \theta^*_\ell\}$. In order to face Bayesian density estimation one can develop a conditional algorithm based on the Ferguson \& Klass representation depicted in Section \ref{sec:example} which exploits the posterior representation of Corollary \ref{cor:post_stable}. We specialize the algorithm to perform posterior inference with Gaussian kernels $\kernel(y\mid\theta) = \kernel(y\mid\zeta, \sigma^2)$, with $\theta = (\zeta, \sigma^2)$, where $\zeta$ and $\sigma^2$ denote the mean and the variance respectively. In the following, we will use $\theta$ or $(\zeta, \sigma^2)$, depending on specific needs. We further assume the centering measure $\alpha(\cdot)$ equals a normal-inverse-gamma distribution $\textsc{NIG}(m_0, k_0, a_0, b_0)$ distribution, with $m_0 \sim \textsc N(m_1, s_1^2)$, $k_0 \sim \textsc{Gamma}(a_1, b_1)$ and $b_0 \sim \textsc{Gamma}(c_1, d_1)$. The updating steps of all the variables of interest are described below.

\begin{enumerate}
	\item[{[1]}] \textbf{Update $\tilde \eta^\prime$.} We sample  $\tilde \eta^\prime$ from its distribution, described in Corollary \ref{cor:post_stable}, where 
	\[
	\tilde{\eta}' \mid - = \sum_{i \geq 1} J_i' \delta_{\tilde{\theta}_i '}\approx \sum_{i = 1}^{I^\varepsilon} J_i' \delta_{\tilde{\theta}_i '},
	\]
	for a fixed threshold $\varepsilon$ that is chosen so that $J_i' < \varepsilon$, for all $i >I_\varepsilon$. To this end, we proceed using the algorithm suggested by \citet{Fer72}, who defined a procedure to generate the weights of a CRM in a decresing order. More precisely, as $i \leq I_\varepsilon$,
	we generate the weights $J_i'$s  by the implementation of  the following steps.
	\begin{enumerate}
		\item[1.a)] Generate the $i$th waiting time of a standard Poisson process $S_i$, with $S_i - S_{i-1}\simiid \mathcal E(1)$, where $\mathcal E(1)$ denotes the exponential distribution with parameter $1$.
		\item[1.b)] Determine the jump $J_i^\prime$ by inverting the L\'evy intensity 
		\begin{equation}
			\label{eq:inversion}
			S_i = \int_{J_i'}^{+\infty} \frac{s^{-1-\sigma}}{(U_1 s+1)^\phi \cdots (U_d s +1)^\phi} \D s \: \frac{\sigma \Gamma (\phi)}{\Gamma (\sigma +\phi) \Gamma (1-\sigma )}
		\end{equation}
		as $i \geq 1$. We remark that the calculation of the integral in \eqref{eq:inversion} can be simplified by considering a change of variable $\gamma: \mathbb R_+ \to (0,1)$, with $y = \gamma(s) = \frac{2}{\pi} \arctan(s)$,
		and we obtain
		\[
		\begin{split}
			S_i &= \frac{\sigma\Gamma(\phi)}{\Gamma(\sigma+\phi)\Gamma(1-\sigma)}\\
			&\times \int_{\frac{2}{\pi}\arctan(J_i^\prime) }^1 \frac{\left[\tan(\frac{\pi}{2}y)\right]^{-1-\sigma}}{\left[U_1\tan(\frac{\pi}{2}y)+1\right]^\phi\cdots\left[U_d\tan(\frac{\pi}{2}y)+1\right]^\phi}  \frac{\pi\sec^2(y\tfrac{\pi}{2})}{2}dy.
		\end{split}
		\]
		We note that the limits of  the integral on the right hand side are in between $[0,1]$, thus the equation  is more manageable and it can be solved numerically to determine $J_i'$.
		\item[1.c)] Generate the atoms $\tilde\theta_1^\prime, \dots, \tilde\theta_{I^\epsilon}^\prime$ of the CRM $\tilde \eta^\prime$, where $\tilde{\theta}_i '\sim \alpha(\cdot)$.
	\end{enumerate}
	\item[{[2]}] \textbf{Update $(\tilde \mu_1, \dots, \tilde \mu_d)$.} Conditionally on $\tilde \eta^\prime$ and the fixed locations $\{\theta_\ell^*\}_{\ell = 1}^k$, the posterior trajectory of  $\tilde \mu_j$, as $j=1, \ldots ,d$, can be approximated as follows
	\[
	\tilde{\mu}_j \mid  - \approx \sum_{i=1}^{I^\varepsilon} m_{j,i}' J_i' \delta_{\tilde{\theta}_i'} + \sum_{\ell=1}^k T_{\ell, j} \sigma_\ell \delta_{\theta_\ell^*},
	\]
	where, according to Corollary \ref{cor:post_stable}, we have to implement the subsequent steps. 
	\begin{enumerate}
		\item[2.a)] Generate $m_{j,i}^\prime \mid - \sim \textsc{Gamma}(\phi, U_j J_i^\prime + 1)$.
		\item[2.b)] Generate $\sigma_\ell$, $\ell = 1, \dots, k$, whose density function is given by
		\[
		g_\ell(s\mid - ) \propto s^{n_\ell - \sigma - 1} \prod_{j=1}^d (sU_j + 1)^{-(n_{\ell,j} + \phi)}.
		\]
		Note that we can easily sample from $g_\ell (\, \cdot \, \mid - )$ exploiting an importance sampling strategy, i.e., we sample a set of $R$ temporary elements $\{\sigma_{\ell,1}^{(t)}, \dots, \sigma_{\ell,R}^{(t)}\}$ from a $\textsc{Gamma}(a_\sigma, b_\sigma)$ distribution, and then we update $\sigma_\ell$ by a random choice from the following discrete probability distribution
		\[
		P(\sigma_\ell = \sigma_{\ell,r}^{(t)}\mid - ) \propto \frac{\big(\sigma_{\ell,r}^{(t)}\big)^{n_\ell - \sigma - 1}\Gamma(a_\sigma)}{b_\sigma^{a_\sigma}\sigma_{\ell,r}^{(t)} e^{-b_\sigma \sigma_{\ell,r}^{(t)}}} \prod_{j=1}^d \big(\sigma_{\ell,r}^{(t)}U_j + 1\big)^{-(n_{\ell,j} + \phi)},\qquad r=1, \ldots , R.
		\].
		\item[2.c)] Generate $T_{\ell,j}^\prime \mid - \sim \textsc{Gamma}(\phi + n_{\ell, j}, \sigma_\ell U_j  + 1)$.
	\end{enumerate}
	\item[{[3]}] \textbf{Update $\sigma$ and $\phi$.} From \eqref{eq:joint_law} we can recover the full conditional distributions  of $\sigma$ and $\phi$:
	\[
	\Law(\sigma \mid - ) \propto \left[ \frac{\sigma \Gamma (\phi)}{\Gamma (\sigma+\phi)
		\Gamma (1-\sigma)} \right]^k \prod_{\ell=1}^k  \int_{\R_+} \sigma_\ell^{n_{\ell, \bullet} -\sigma-1} \prod_{j=1}^d \frac{1}{(U_j \sigma_\ell+1)^{n_{\ell, j}+\phi}} \D \sigma_\ell
	\]
	and
	\[
	\Law(\phi \mid - ) \propto \prod_{\ell=1}^k\prod_{j=1}^d(\phi)_{n_{\ell,j}}\left[ \frac{\Gamma (\phi)}{\Gamma (\sigma+\phi)
	} \right]^k \prod_{\ell=1}^k  \int_{\R_+} \sigma_\ell^{n_{\ell, \bullet} -\sigma-1} \prod_{j=1}^d \frac{1}{(U_j \sigma_\ell+1)^{n_{\ell, j}+\phi}} \D \sigma_\ell,
	\]
	where the integrals can be easily evaluated with the change of variable described at~1.b. For the update of $\sigma$ and $\phi$, we perform a Metropolis-Hastings step by sampling the  transformed parameters $(\omega, \rho)$, where $\omega = \tan(\pi\sigma - \pi/ 2)$ and $\rho = \log(\phi)$. 
	\item[{[4]}] \textbf{Update the latent variables $\theta_{i,j}$s.} We sample $\theta_{i,j}$ from  the following distribution
	\[
	\Law(\theta_{i,j}\mid -) \propto \sum_{r=1}^{I^\varepsilon} m_{j,r}^\prime J_i^\prime \kernel\big(y_{i,j}\mid \tilde \theta_i^\prime\big) \delta_{\tilde \theta_i^\prime}(\theta_{i,j}) + \sum_{\ell = 1}^k T_{\ell, j}\sigma_\ell \kernel\big(y_{i,j}\mid \theta_{\ell}^*\big)\delta_{\theta_{\ell}^*}(\theta_{i,j}),
	\]
	where $\kernel$ is a Gaussian kernel.
	\item[{[5]}] \textbf{Update $(U_1, \dots, U_d)$.} Since we are working conditionally on $\tilde{\mu}_j$, as $j=1,\ldots , d$, one may exploit \eqref{eq:int_gamma} to observe that the $U_j$s are independent with the following gamma distribution
	\begin{equation*}
		U_j \mid - \sim {\textsc{Gamma}} (n_j, \tilde{\mu}_j (\Theta)), \qquad j=1, \ldots, d.
	\end{equation*}
	\item[{[6]}] \textbf{Resample the distinct values $\theta_\ell^* = (\zeta_\ell^*, \sigma_\ell^{2*})$.} With the previous distributional assumptions, we have
	\[
	\Law(\D \zeta_\ell^*, \D \sigma_\ell^{2*}\mid -) \propto \alpha(\D \zeta_\ell^*, \D \sigma_\ell^{2*}) \prod_{j=1}^d \prod_{i \in C_{j,\ell}} \kernel(y_{i,j}\mid \zeta_\ell^*, \sigma_\ell^{2*}),
	\]
	where $\alpha$ stands for the probability law of a normal-inverse-gamma random variable. We set $\bar y_{\ell} = \frac{1}{n_{\ell, \bullet}}\sum_{j=1}^d \sum_{i \in C_{\ell, j}} y_{i,j} $, with $n_{\ell, \bullet} = \sum_{j=1}^d \lvert C_{\ell, j}\rvert$ the total number of latent parameters $(\zeta_{j,i}, \sigma_{j,i}^2)$ equal to the $\ell$th distinct value $ (\zeta_\ell^*, \sigma_\ell^{2*})$. Standard computations show that
	\[
	\begin{split}
		\sigma_{\ell}^{2*}\mid - &\sim \textsc{Inv-Gamma}(a_0^*, b_0^*),\\
		\zeta_\ell^*\mid \sigma_{\ell}^{2*}, - &\sim \mathrm N(m_{0}^*, \sigma_{\ell}^{2*} / k_0^*),
	\end{split}
	\]
	where the parameters of the full conditional distributions of $\sigma_{\ell}^{2*}$ and $\zeta_{\ell}^{*} \mid \sigma_{\ell}^{2*}$ are
	\[
	\begin{split}
		m_0^* &= \frac{1}{k_0 + n_{\ell,\bullet}} \left( k_0 m_0 + n_{\ell, \bullet} \bar y_{\ell}\right),\\
		k_0^* &= k_0 + n_{\ell, \bullet},\\
		a_0^* &= a_0 + \frac{n_{\ell, \bullet}}{2},\\
		b_0^* &= b_0 + \frac{1}{2}\left[ \sum_{j=1}^d\sum_{i \in C_{\ell, j}} (y_{i,j} - \bar y_{\ell})^2 + \frac{n_{j,\bullet} k_0(\bar y - m_0)^2}{k_0 + n_{j,\bullet}} \right].
	\end{split}
	\]
	\item[{[7]}] \textbf{Update the parameters of $\alpha(\cdot)$.} According to our distributional assumptions, we have
	\[
	\begin{split}
		m_0 \mid - &\sim \mathrm N(m_1^*, s_1^{2*}),\\
		k_0 \mid - &\sim \textsc{Inv-Gamma}(a_1^*, b_1^*),\\
		b_0 \mid - &\sim \textsc{Gamma}(c_1^*, d_1^*),\\
	\end{split}
	\]
	where we have set
	\begin{align*}
		s_1^{2*} &= \left( \frac{1}{s_1^2} + k_0 \sum_{\ell= 1}^k \frac{1}{\sigma_\ell^{2*}} \right)^{-1},\qquad
		&&m_1^* = s_1^{2*} \left( \frac{m_1}{s_1^2} + k_0 \sum_{\ell = 1}^k \frac{\mu_\ell^*}{\sigma_\ell^{2*}}\right),\\
		a_1^* &= a_1 + \frac{k}{2},\qquad
		&&b_1^* \;\,= b_1 + \frac{1}{2}\sum_{\ell = 1}^k \frac{(\mu_\ell^* - m_0)^2}{\sigma_\ell^{2*}},\\
		c_1^* &= c_1 + k a_0,\qquad
		&&d_1^* \,\,= d_1 + \sum_{\ell=1}^k \sigma_\ell^{2*}.
	\end{align*}
\end{enumerate}

\section{Simulation study with CoRMs}\label{sec:sim_comp}


We have conducted different simulation studies to investigate the impact of different sample sizes and data-generating processes on  posterior inference obtained through Algorithm~\ref{algo:compound}. 
To this end, we sampled sets of synthetic data form the following marginal distributions to obtain six groups of observations: 
\[
\begin{split}
	Y_{1} &\simiid \frac{1}{2} \textsc N (\mu_1, 0.6) + \frac{1}{4} \textsc N (\mu_2, 0.6) + \frac{1}{4} \textsc N (\mu_3, 0.6),\\
	Y_{2} &\simiid \frac{1}{4} \textsc N (\mu_2, 0.6) + \frac{1}{2} \textsc N (\mu_3, 0.6) + \frac{1}{4} \textsc N (\mu_4, 0.6),\\
	Y_{3} &\simiid \frac{1}{2} \textsc N (\mu_1, 0.6) + \frac{1}{2} \textsc N (\mu_2, 0.6),\\
	Y_{4} &\simiid \frac{1}{5} \textsc N (\mu_3, 0.6) + \frac{1}{5} \textsc N (\mu_4, 0.6) + \frac{1}{5} \textsc N (\mu_5, 0.6) + \frac{2}{5} \textsc N (\mu_6, 0.6),\\
	Y_{5} &\simiid \frac{1}{2} \textsc N (\mu_5, 0.6) + \frac{1}{2} \textsc N (\mu_3, 0.6),\\
	Y_{6} &\simiid \frac{1}{4} \textsc N (\mu_1, 0.6) + \frac{1}{4} \textsc N (\mu_2, 0.6) + \frac{1}{4} \textsc N (\mu_3, 0.6) + \frac{1}{4} \textsc N (\mu_5, 0.6),
\end{split}
\]
where some of the components are shared across different groups, meaning they have the same mean value, while others are group-specific. We consider a total of four distinct scenarios for the simulation study, combining two different group-specific sample sizes $n_j \in \{50, 200\}, \; j = 1, \dots, 6$, with two possible parameter choices for $\mu_1, \dots, \mu_6$, i.e.
\[
\begin{matrix}
	\text{(1) } \mu_1 = 6, &\mu_2 = 10, &\mu_3 = 15, &\mu_4 = 20, &\mu_5 = 0, &\mu_6 = 3,\\
	\multicolumn{6}{c}{\text{or}}\\
	\text{(2) } \mu_1 = 4, &\mu_2 = 6.66, &\mu_3 = 10, &\mu_4 = 13.33, &\mu_5 = 0, &\mu_6 = 2.
\end{matrix}
\]
It is worth noticing that specification (1) represents the case where the components are well separated, making it easier to estimate the latent clusters in the data. On the other hand, the second parameter choice (2) corresponds to lower separation of the components. Each scenario has been replicated $100$ times.

We use a dependent mixture model with Gaussian kernels, namely 
\begin{equation*}
	\begin{split}
		Y_{i,j}\mid \mu_{i,j},\sigma_{i,j}^2 &\simiid \textsc N( \mu_{i,j}, \sigma_{i,j}^2),\qquad i=1,\dots,n_j, \;j=1,\dots,d\\
		(\mu_{i,j}, \sigma_{i,j}^2) \mid \tilde p_j &\simind \tilde p_j, \qquad i=1,\dots,n_j, \;j=1,\dots,d\\
		(\tilde p_1, \dots, \tilde p_d) &\sim \Qcr, 
	\end{split}  
\end{equation*} 
where $\Qcr$ is the distribution of a  normalized vector of CoRMs with gamma scores and stable directing L\'evy intensity as in \eqref{eq:choice_f_nu}. The model specification is completed by setting a normal-inverse-gamma distribution $\textsc{NIG}(m_0, k_0, a_0, b_0)$ as the base measure $\alpha(\cdot)$
of $\Qcr$. We further relax the model specification by considering hyperprior distributions on the main parameters, namely $m_0 \sim \textsc N(0, 10)$, $k_0 \sim \textsc{Gamma}(2,2)$, $a_0 = 2$ and $b_0 \sim \textsc{Gamma}(2,2)$, $\sigma \sim \textsc{Beta}(2,2)$, and $\phi = 1$. The specific choice $\phi = 1$ implies that the relative variability (coefficient of variation) of the scores equals $1$, which may be considered an intermediate value.
We ran Algorithm \ref{algo:compound} for $15\,000$ iterations, discarding the first $10\,000$ iterations. 

We assessed the accuracy of the clustering using the normalized variation of information distance \citep{Mei07} between the posterior point estimate of the latent partition in the data and the true one. The point estimates were obtained using a decisional approach based on the variation of information loss function \citep{Wad18, Ras18}. Further, we used the J-divergence \citep{Jef48}, averaged with respect to the different groups of data, to quantify how closely the estimated density functions match the true ones.

\begin{figure}[!ht]
	\centering
	\includegraphics[width = 0.69\textwidth]{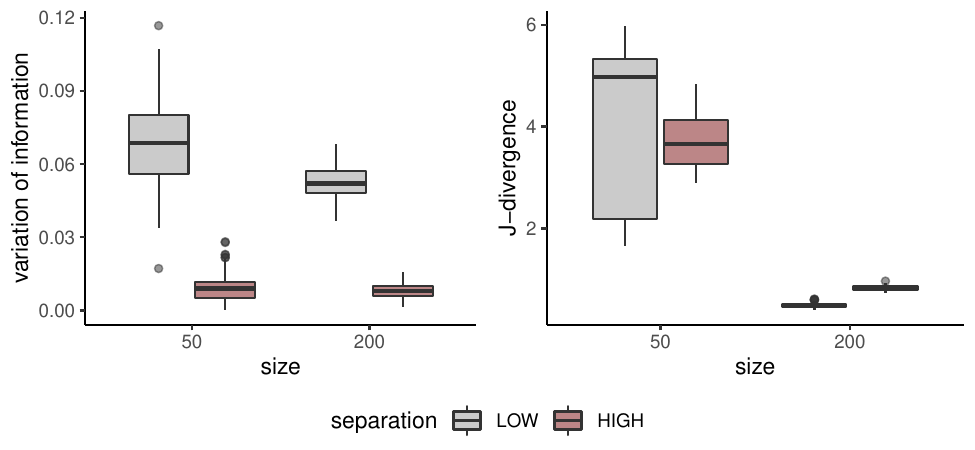}
	\caption{
		Left panel: variation of information distance between true and point estimate of the latent partition of the data. Right panel: J-divergence between data generating and posterior
		point estimate density functions. 
		Different scenarios
		are a combination of low and high number of groups (LG and HG respectively) with low and high separation of the components (LS and HS respectively).
		}\label{fig:comp_simu}
	
\end{figure}
Figure \ref{fig:comp_simu} shows the box plots for the normalized variation of information and the J-divergence for all the scenarios considered in the study. We can observe that the posterior estimates of the latent partitions are more accurate when the separation of data-generating processes increases, as the variation of information distance is dropping toward zero. This means that, as the location parameters $\mu_1, \dots, \mu_6$ become further apart from each other, the posterior estimates of the latent partitions are getting closer to the true values. Additionally, the posterior estimates of the random densities become closer to the true values as the group-specific sample sizes increase. 

%
\section{Analysis of the Sloan Digital Sky Survey data}\label{app:sky}

We consider a dataset of $n = 24\,312$ galaxies, drawn from the Sloan Digital Sky
Survey first data release \citep{Aba03}. The galaxies are naturally partitioned into $d = 25$ different groups, by considering all the cross-classifications of  $5$ kinds of  luminosity and $5$ groups of different environment. Our inferential interest is to properly reconstruct the density function of the difference between ultraviolet and red filter (U-R color), specifically for each group in the study. As highlighted by \citet{Bal04}, the bimodal structure of such density functions has been established. Since some groups contain few observations to properly recognize this behavior, the dependence structure induced by CoRMs allow a  borrowing of information across the different groups, especially useful for groups having few data.  We refer to \citet{Can19, Ste21} for allied approaches. 

We consider the hierarchical mixture model \eqref{eq:mod_mixture} with Gaussian kernels,  where the mixing measures are normalized CoRMs with gamma distributed scores, stable directing L\'evy measure, and centering measure equal to a normal-inverse-gamma distribution, as done in Section~\ref{sec:sim_comp}. The model specification is completed by assuming the following hyperpriors on the main parameters of the model: $m_0 \sim \textsc N(0, 10)$, $k_0 \sim \textsc{Gamma}(2,2)$, $a_0 = 2$, $b_0 \sim \textsc{Gamma}(2,2)$, $\sigma \sim \textsc{Beta}(2,2)$ and $\phi \sim \textsc{Gamma}(2,2)$. We perform posterior inference by using Algorithm \ref{algo:compound}. See also Section~\ref{app:algo} for further details. We ran the algorithm for $15\,000$ iterations, including $10\,000$  burn-in iterations.  

\begin{figure}[!ht]
	\centering
	\includegraphics[width = 0.89\textwidth]{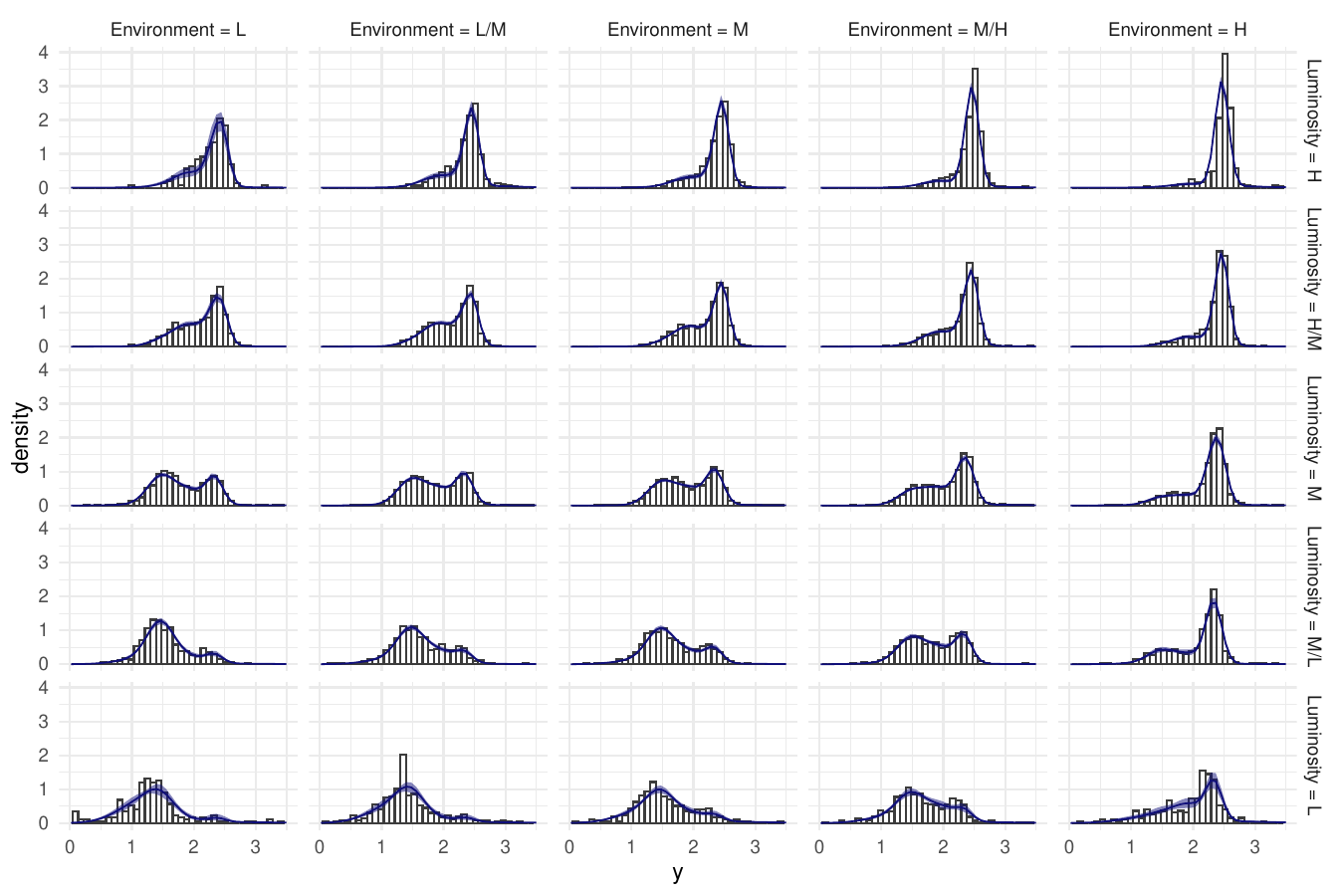}
	\caption{Posterior density estimates of the Sloan Digital Sky Survey data. Different columns correspond to different groups of environment. Different rows correspond to different groups of luminosity. Each panel show the histogram of the observed data, the estimated posterior mean of density (blue line) with a $0.95$-probability credible band (blue filled area). }\label{fig:sloan}
\end{figure}

The group-specific density estimates are reported in Figure \ref{fig:sloan}.
We can appreciate how the density estimates for the groups corresponding to low or low/medium level of both environment and luminosity feature  bimodal distributions with a positive skewness, i.e., the mass is concentrating on the left side of the support. On the counterpart, groups with high or medium/high level of both the grouping variables tend to have negative skewness in their distribution, with the mass concentrating on the right side of the support. 

One advantage of using CoRMs relies on the possibility of deriving the baseline measure, that is the latent shared measure $\tilde p=\tilde \eta/ \tilde \eta (\X)$, which is transformed in group-specific mixing measures $\tilde p_1, \dots, \tilde p_d$. We can then define a latent baseline mixture model as
\begin{equation} \label{eq:directing_mixture}
	\tilde f_0(y) = \int \kernel(y; (\mu, \sigma^2)) \tilde p(\D\mu, \D\sigma^2).
\end{equation}
Note that $\tilde f_0$ represents the information shared across different groups. 
The posterior mean of the baseline mixture model, shown in Figure~\ref{fig:trace_sloan}, clearly shows the aforementioned  bimodal behaviour, which is then specialized in the group-specific mixture models of Figure~\ref{fig:sloan}, where the weights of the common mixture $\tilde f_0$ are adjusted by the group-specific~scores.


\begin{figure}
	\centering
	\includegraphics[width=0.52\textwidth]{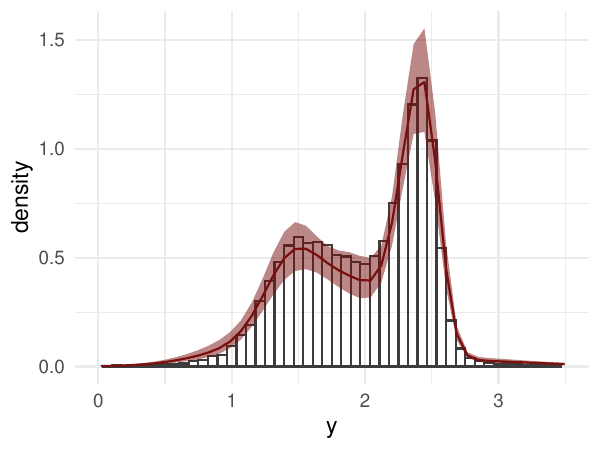}
	\caption{
		Posterior mean of the directing mixture model \eqref{eq:directing_mixture}. The histogram refers to all the data reported in the Sloan Digital Sky Survey. The red line is the estimated posterior mean of the density, while the red filled area is the $0.95$-probability credible band.}\label{fig:trace_sloan}
\end{figure}


\end{document}